\newtheorem{theorem}{Theorem}
\newtheorem*{proof*}{Proof}
\newtheorem{lemma}{Lemma}
\title{Type Annotation for Adaptive Systems}
\author{Paolo Bottoni
\institute{Sapienza University, Rome, Italy}
\institute{Dep. of Comp. Sc.}
\email{bottoni@di.uniroma1.it}
\and
Andrew Fish \institute{ University of Brighton, Brighton, UK}
\institute{Sch. of Comp., Eng. and Math.}
\email{Andrew.Fish@brighton.ac.uk}
\and
Francesco Parisi Presicce
\institute{Sapienza University, Rome, Italy}
\institute{Dep. of Comp. Sc.}
\email{parisi@di.uniroma1.it}
}
\begin{document}

\maketitle

\begin{abstract}
We introduce type annotations as a flexible typing mechanism for graph systems and discuss their advantages with respect to classical typing based on graph morphisms. In this approach the type system is incorporated with the graph and elements can adapt to changes in context by changing their type annotations. We discuss some case studies in which this mechanism is relevant.
\end{abstract}

\lstset{ %
  basicstyle=\footnotesize,        
  breakatwhitespace=false,         
  breaklines=true,                 
  escapeinside={\%*}{*)},          
  frame=single,	                   
  keepspaces=true,                 
  keywordstyle=\bfseries,       
  language=OCL,                 
  otherkeywords={let,in,context,inv,forall,exists},      
  tabsize=2,	                   
}

\section{Introduction}\label{sec:intro}
Typing systems define the admissible structures and behaviours of models formed by their instances, possibly defining relations among types in terms of inheritance and composition.
Such a general notion is at the basis both of object-oriented analysis and design, and of metamodeling techniques, for both of which the relations with the field of graph transformations have been explored~\cite{DBLP:journals/entcs/FerreiraR05,DBLP:journals/sosym/EhrigKT09}.

When using an approach based on graph transformation theory, admissible structures are defined by graph constraints, either explicitly presented as such or included in the graph type, while behaviours are defined by collections of rules, possibly regulated by some control mechanism.
This supports a view of an application domain as defined by a type graph and a policy for model generation (i.e. a collection of constraints) and evolution (i.e. a collection of rules).
However, the development of concrete applications often requires bringing together concerns that are expressed with respect to different domains, or are cross-domains.
In such cases, two possibilities are typically offered: either to merge different domains in a comprehensive one, which might become inflexible in view of possible evolutions of the application, or to maintain the domains separated, but to build explicit connections among them through specific constructs.
In particular, triple graphs are a way to construct explicit relations between types in two different domains, allowing the definition of traces of model transformations~\cite{SK08}.
In a series of recent papers we have proposed annotations to add context-specific information in a flexible way  to elements of a graph, while preserving the separation of the original
domains~\cite{DBLP:journals/eceasst/BottoniP13,DBLP:journals/vlc/BottoniP13}.

In this paper, we bring the notion of annotation to bear on the context of model adaptation, by proposing to replace the notion of typing as traditionally expressed via a typing morphism from an instance graph to a type graph with the notion of annotation of model elements with type information, thus embedding the type information in the model itself.
This achieves three important objectives:

\begin{enumerate}
  \item Instances can be typed in a dynamic way, by updating the type annotation associated with them (see the example in Section~\ref{sec:sex}). Moreover, in contrast to what is required in metamodel evolution, individual instances of one same type may follow different evolutions, i.e. some change to a type, some change to another, and some remain of the same type, adapting to their specific contexts. Moreover, type graphs may be enriched, rather than substituted.
  \item Multiple typing is inherently supported, unless otherwise constrained, since typing information is maintained via specific graph patterns and not via typing morphisms.
  \item Types do not univocally define the structure of their instances, but express some conformance requirement (see the example in Section~\ref{sec:classification}).
%
%
\end{enumerate}

From a philosophical point of view, we might say that a type, rather than defining an ontological concept, defines a set of observations that we can perform on an element and through which we can deem it as an  instance of that type, while not excluding that a different set of observations would lead to a different categorisation of the same individual element under a different type. This is particularly relevant when considering the use of some object as a resource for some process other than the ones it was originally intended. For example, to a producer a cardboard box is a \emph{packaging material} defined by number of measures, from grammage to edge crush resistance. Once it has entered a house and has been emptied of its content, its usefulness as a \emph{container} for other material can be established based on its width, length and height, while to evaluate its safety as a \emph{support} for a toddler who wants to climb on it, a new type of measure, what could be called ``resistance to toppling'', would be needed.

This paper starts this line of investigation, enriching the notion of annotation with that of type annotation.
Firstly, in Section~\ref{sec:related}, related work is briefly discussed, followed by condensed formal background notions in Section~\ref{sec:background}.
Next, Section~\ref{sec:typing} deals with the process of transforming typing into annotations, including  inheritance.
%
%
Then, Section~\ref{sec:dynamic} provides the model of dynamic typing in the context of annotation types and discusses the impact this
has with respect to constraints associated with a type, whilst Section~\ref{sec:caseStudies} provides extracts of case studies showing how the use of type annotation can model some typical situations. We conclude in Section~\ref{sec:concls}. 

\section{Related Work}\label{sec:related}
In a series of papers we have proposed annotations as a flexible way to integrate contextual information into application domains and have discussed ways to derive application conditions and repair actions from the presence of annotations~\cite{DBLP:journals/eceasst/BottoniP13,DBLP:journals/vlc/BottoniP13}.

In~\cite{LGC15}, the authors decouple the two aspects of typing: as a blueprint for creation and as a way of classifying elements. They propose to rely on standard mechanisms for object creation and use \emph{a-posteriori} typing at the type level to relate two different metamodels, and at the instance level as a means to reclassify objects and enable multiple, partial, dynamic typings. Our proposal achieves objectives similar to instance-level a-posteriori typing, also dealing with changes of type within the same hierarchy.

In~\cite{DBLP:journals/tois/GottlobSR96}, class hierarchies are complemented by role hierarchies, whose nodes represent role types that an object classified in the root may take on. At any point in time, an entity is represented by an instance of the root and an instance of every role type whose role it currently plays. Annotations can be used to express type and role information simultaneously.

The notion of annotation and its application to typing offers possibilities for the extension of graph transformation approaches to the Semantic Web, exploiting explicit relations between instances and concepts, analogous to the formalisation of RDF proposed in~\cite{DBLP:journals/eceasst/BraatzB08}.

A survey on the notion of adaptation has been presented in~\cite{BCGLV12}, mainly with reference to programming, and proposing an initial formalisation of the dynamics of adaptation based on Labeled Transition Systems, without a specific accent on typing. 

\section{Background}\label{sec:background}
A \emph{category} is a construct
$\mathbf{C}=({Ob}(\mathbf{C}),{Hom}(\mathbf{C}),\circ)$ where
${Ob}(\mathbf{C})$ is a collection of objects, ${Hom}(\mathbf{C})$ is a collection of (homo)morphisms $m:a\rightarrow{b}$ for some
$a,b\in{Ob}(\mathbf{C})$ and
$\circ:{Hom}(\mathbf{C})\times{Hom}(\mathbf{C})\rightarrow{Hom}(\mathbf{C})$ is the composition operation such that:
\begin{enumerate}
\item $\forall{o}\in{Ob}(\mathbf{C})$ $\exists{id_o}\in{Hom}(\mathbf{C})$, with ${id_o}\circ{m}={m}\circ{id_o}={m}$ for any ${m}\in{Hom}(\mathbf{C})$;
\item  $\forall{m_1},m_2,m_3\in{Hom}(\mathbf{C})$ with $m_1:a\rightarrow{b}$, $m_2:b\rightarrow{c}$, $m_3:c\rightarrow{d}$,
$(m_3\circ{m_2})\circ{m_1}=m_3\circ({m_2}\circ{m_1})$.
\end{enumerate}

We set our treatment in the category $\mathbf{Graph}$, defined by graphs and graph morphisms~\cite{DBLP:series/eatcs/EhrigEPT06}.
A (directed) \emph{graph} is  a tuple $(V,E,s,t)$, with $V$ and $E$
finite sets of \emph{nodes} and \emph{edges}, and functions
$s:E\rightarrow V$,
$t:E\rightarrow{V}$ mapping an edge to its source and target. The notion is extended to that of (directed)
\emph{graph with boxes}~\cite{DBLP:journals/eceasst/BottoniP13} as a tuple $G=(V,E,B,s,t,cnt)$, where:
\begin{inparaenum}[(1)]
\item $V$ and $E$ are sets of nodes and edges as in usual graphs;
\item $B$ is a set of boxes, such that $B\cap(V\cup{E})=\emptyset$;
\item the \emph{source} and \emph{target} functions $s$ and $t$ extend their
    codomains to $V\cup{B}$;
\item $cnt:B\rightarrow\wp(V\cup{B})$ is a function associating a box with
    its
    \emph{content}\footnote{Here and elsewhere $\wp$ denotes the powerset.} with the properties that $b\not\in{cnt(b)}$ and if $x\in{cnt(b_1)}$ and $b_1\in{cnt(b_2)}$, then $x\in{cnt(b_2)}$.
\end{inparaenum}

We refer to graphs with boxes as \emph{B-graphs}
or just graphs unless it is necessary to distinguish them.

A \emph{type B-graph} is a distinguished graph
$TG$ $=$ $(V_T,E_T,B_T,s^T,t^T,cnt^T)$, where $V_T$, $E_T$
and $B_T$ are sets of node, edge and box types, respectively, while the
functions $s^T:E_T\rightarrow{V_T}\cup{B_T}$ and
$t^T:E_T\rightarrow{V_T\cup{B_T}}$ define source and target node- and box-
types for each edge type, and the function
$cnt^T:B_T\rightarrow\wp(V_T\cup{B_T})$ associates each type of box with the
set of types of elements it can contain.

A \emph{morphism}  $f:G_1\rightarrow{G_2}$ between \emph{B-graphs}, with
${G_i}=(V_i,E_i,B_i,s_i,t_i,cnt_i)$ for $i=1,2$, is a triple
$(f_V:V_1\rightarrow{V_2},f_E:E_1\rightarrow{E_2},f_{B}:B_1\rightarrow{B_2})$
that preserves source, target and inclusion functions, i.e.,
$f_{V\cup{B}}\circ{s_1}={s_2}\circ{f_E}$,
$f_{V\cup{B}}\circ{t_1}=t_2\circ{f_E}$, and if $x\in{cnt_1(b)}$ for some $b\in{B_1}$ and $x\in{V_1\cup{B_1}}$, then
$f_{V\cup{B}}(x)\in{cnt_2}(f_{B}(b))$, where $f_{V\cup{B}}$ is defined as the union of $f_V$ and $f_{B}$.

A B-graph $G$ is \emph{typed on} a type B-graph $TG$ if there is a graph
morphism $\mathit{tp}^G:G\rightarrow{TG}$, with $\mathit{tp}^G_V:V\rightarrow{V_T}$,
$\mathit{tp}^G_E:{E}\rightarrow{E_T}$ and $\mathit{tp}^G_{B}:B\rightarrow{B_T}$ s.t.
$\mathit{tp}^G_V(s(e))$ $=$ $s^T(\mathit{tp}^G_E(e))$ and $\mathit{tp}^G_{V\cup{B}}(t(e))$ $=$ $t^T(\mathit{tp}^G_E(e))$ such that
$\forall{b}\in{B}\forall{x}\in{cnt(b)}[\mathit{tp}^G_X(x)\in{cnt^T}(\mathit{tp}^G_{B}(b))]$,
where $X$ is one of ${V,B}$, depending on the type of $x$.
A morphism $f:G_1\rightarrow{G_2}$ between $TG$-typed graphs preserves the type,
i.e. $\mathit{tp}^{G_{2}}\circ{f}=\mathit{tp}^{G_{1}}$.
%
%
A \emph{graph transformation rule} is a span of graph morphisms
\smash{$L\stackrel{l}{\leftarrow}{K}\stackrel{r}{\rightarrow}{R}$}, and is
applied following the Double Pushout Approach.
A \emph{graph constraint} is a
morphism $c:P\rightarrow{C}$, \emph{satisfied} by a graph $G$ if for each
morphism $m_p:P\rightarrow{G}$, a \emph{match} morphism
$m_c:C\rightarrow{G}$ exists, with
$m_c\circ{c}=m_P$. A \emph{negative graph constraint} (or \emph{forbidden graph}) is defined as $(\neg{c}):P\rightarrow{P}$ and is satisfied by $G$ only if no such match for $P$ is found in $G$.
An \emph{application condition} for
\smash{$L\stackrel{l}{\leftarrow}{K}\stackrel{r}{\rightarrow}{R}$} is a morphism
$ac:L\rightarrow{AC}$, such that the rule is applicable on a match
$m_L:L\rightarrow{G}$ if a match $m_{ac}:AC\rightarrow{G}$ exists such that
$m_{ac}\circ{ac}=m_L$.
A \emph{negative application condition} requires such an
$m_{ac}$ not to exist.

Annotations of elements
of a domain ${\cal{D}}_1$
with nodes of a domain ${\cal{D}}_2$ are defined via nodes
from \texttt{AnnotationNode}. We call $\cal{A}$ the domain of such
annotation nodes. A graph is \emph{well-formed with respect to annotations}, if each 
%
%
node 
$a\in{\cal{A}}$ participates in exactly one instance of the 
\emph{annotation pattern}
$\pi_a=x\stackrel{e_1}{\leftarrow}{a}\stackrel{e_2}\rightarrow{y}$,
where $x\in{\cal{D}}_1$, $y\in{\cal{D}}_2$, $e_1$ is an edge of type \texttt{annotates}
and $e_2$ is an edge of
type \texttt{with}.

While we use attributes in the definition of the metamodel presented in Section~\ref{sec:typing}, in this paper we work with graphs that are just typed, without considering attributes as part of their structure. For all practical purposes, we can model attribution as a special kind of annotation of an element with some value in some domain, and we can express through suitable constraints the fact that elements of a given type must be endowed with some set of attributes, each taking values of some given type.

\section{From typing to annotations}\label{sec:typing}
We
%
%
set
our work in the collection of models conforming to the metamodel $\mathbf{M}$ shown as a UML model in
Figure~\ref{fig:metamodel}. A \emph{domain} is defined by a type \emph{graph}
(composed of \emph{model elements})
and a \emph{policy}, and policies are defined by a collection of \emph{morphisms}. Except for \texttt{Domain}, which is defined by a collection of policies referring to some common type graph,
all other notions derive from that of \texttt{Element}, which possibly has a name, so that all elements may be involved in annotations.
As specified in Section~\ref{sec:background}, the \texttt{src} and \texttt{tgt} associations, representing the $s$ and $t$ functions,
can only relate instances of \texttt{Node} (but not of \texttt{AnnotationNode}, as specified by a constraint not shown here) or \texttt{Box} to an \texttt{Edge}.

\begin{figure}[htb]
\centering
  \includegraphics[width=15.1cm]{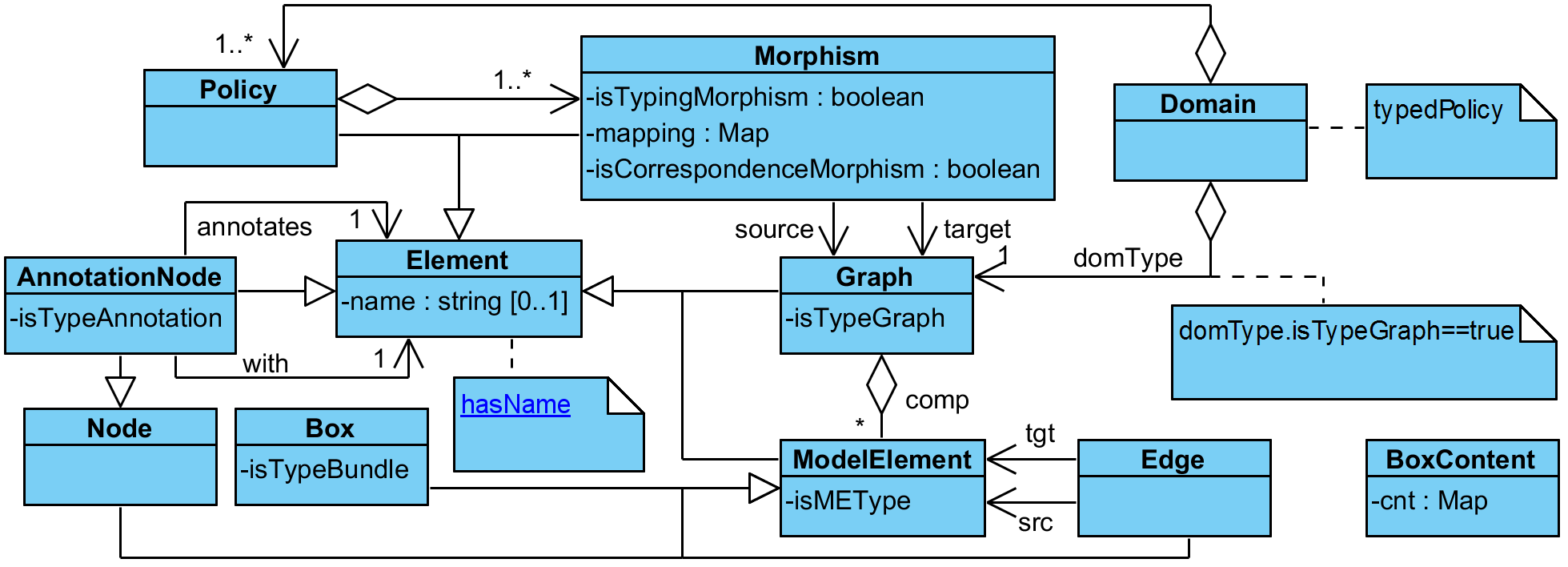}
  \caption{The metamodel $\mathbf{M}$ for incorporating types into graphs, also giving access to morphism maps.}
  \label{fig:metamodel}
\end{figure}

Types are defined, as per the value of a boolean attribute, as special
model
elements, and
%
%
analogously for
typing morphisms. All type elements have a name.
Any element can be the argument (i.e. be reached by an edge of type \texttt{annotates}) or the value (reached by an edge of type \texttt{with}) of some annotation, respecting the constraint for well-formedness, with suitable restrictions described below for type annotations. We consider here  morphisms specifying rules or (positive) constraints.
The following global constraints, expressed in OCL, require that all morphisms in the policy for a domain are typed on the type graph associated with the domain:

\begin{lstlisting}
context Domain inv
let
	allMor : Set = Morphism.allInstances(),
	type : Graph = self.domType
in

type.isTypeGraph = true and

self.policy.morphism -> forall(m | allMor ->
  exists(m2,m3 |
	m2.isTypingMorphism = true and m2.source = m.source and m2.target = type and
	m3.isTypingMorphism = true and m3.source = m.target and m3.target = type))
\end{lstlisting}

%
Morphisms are here considered as first class elements, whose actual specification (i.e. the function relating elements in the two graphs composing the morphism) resides in a map.
Similarly, we reify the $cnt$ function for the box content and specify it through a map.
Moreover, some local constraints (not shown here) 
define the type annotation pattern, extending the constraint of well-formedness
%
%
discussed in Section~\ref{sec:background}, whereby each annotation node must annotate exactly one element with exactly one element.

The \texttt{annNodeInstance} and \texttt{annNodeType} constraints in Figure~\ref{fig:annTypeToNodeType} ensure that node instances are annotated with type nodes and that type nodes are used only to annotate node instances. Analogous constraints are defined for edges and boxes so that type annotations are consistent with the sorts of the involved elements. Forbidden graphs prevent type annotation nodes to be used for elements of any other sort. For an element $x$ of the \texttt{Node}, \texttt{Edge} or \texttt{Box} sort, we will denote by $annType(x)$ the (set of) type(s) with which $x$ is annotated. For simplicity, whenever $annType(x)$ is a singleton we will also use the same notation to indicate its only element. The condition
\texttt{isTypeBundle=true} indicates a particular kind of box which can contain a bundle of types and that will be used in
Sectione~\ref{sec:inheritance} to model inheritance.


\begin{figure}[htb]
\centering
  \subfigure{\includegraphics[height=3.2cm]{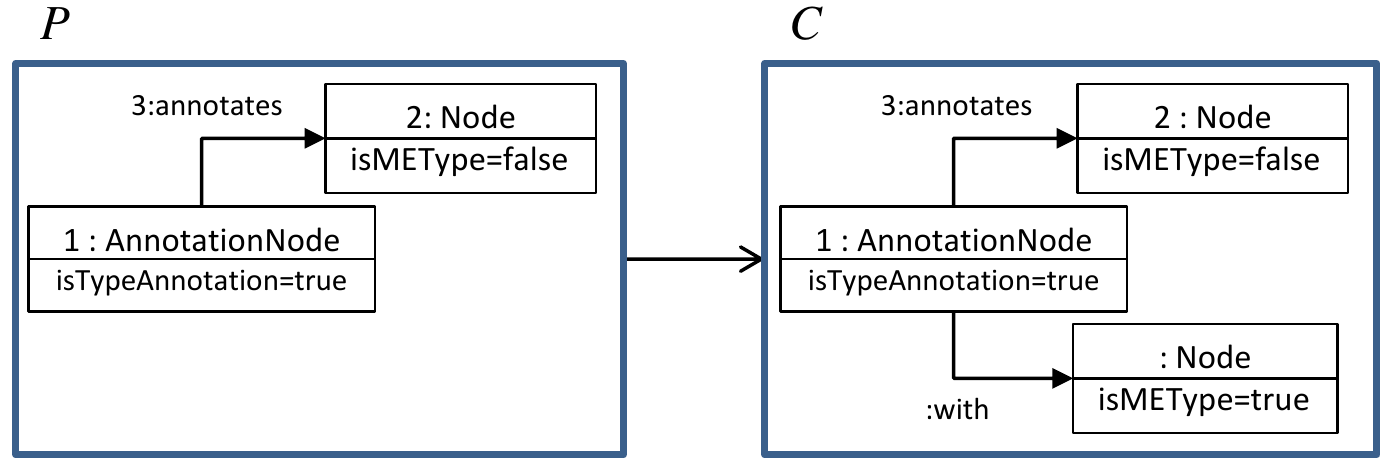}}
  \subfigure{\includegraphics[height=3.2cm]{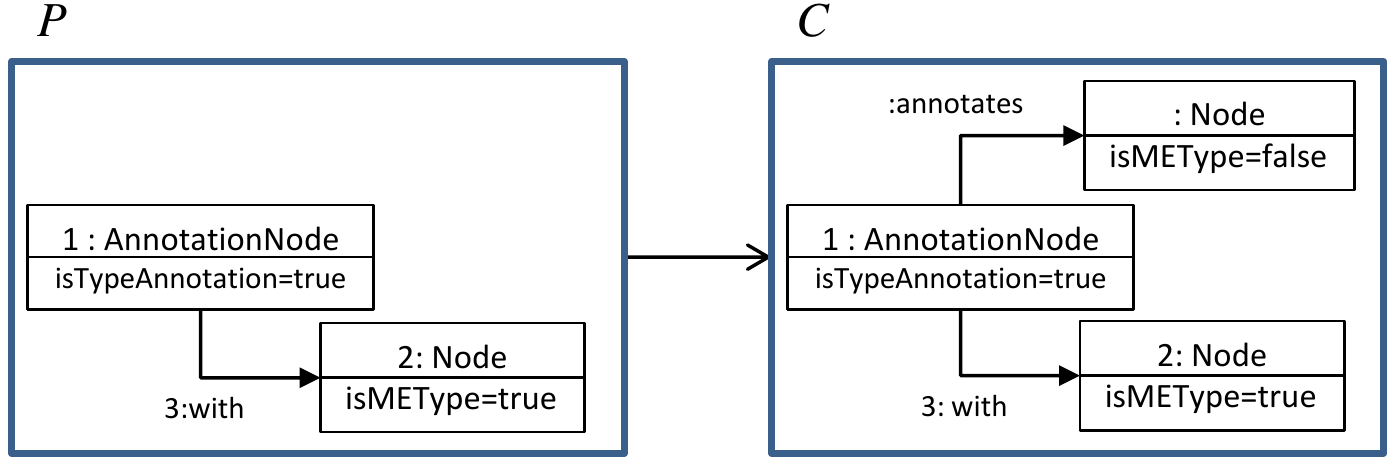}}
  \caption{Constraints \texttt{annNodeInstance} and \texttt{annNodeType} for consistency with the \texttt{Node} sort.}
  \label{fig:annTypeToNodeType}
\end{figure}

The forbidden graph \texttt{notTypedTwice} in Figure~\ref{fig:notTypedTwice} expresses the fact that an element cannot be typed in the same way twice (combined with the constraints in Figure~\ref{fig:annTypeToNodeType} such an element can only be a node, edge or box instance). It does not exclude, however, that an element be typed in two (or more) different ways, as will be discussed in Section~\ref{sec:dynamic}.
A further constraint, not shown here due to lack of space, expresses that if there are two edges annotated with the same edge type,
then their source nodes are annotated with the same node type, and so are their target nodes.
%

\begin{figure}[htb]
\centering
  \includegraphics[height=3.2cm]{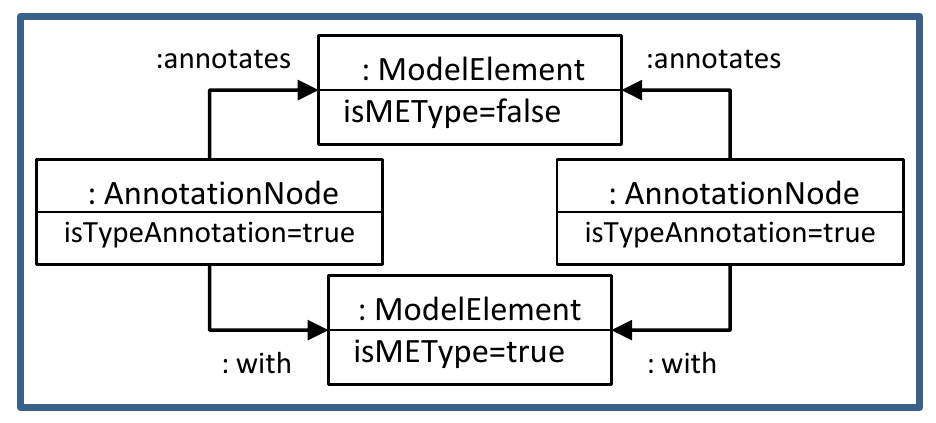}
  \caption{The forbidden graph \texttt{notTypedTwice}: An element cannot be typed in the same way twice.}
  \label{fig:notTypedTwice}
\end{figure}

We say that a graph $G$ is \emph{with type annotation} (or that it is a \emph{type-annotated graph}),
\emph{iff} there exists a typing of $G$ on the metamodel $\mathbf{M}$ and $G$ satisfies all of the constraints discussed above. Note that all these constraints are concerned with well-formedness, while correctness of type annotations within specific domains must be expressed through domain-dependent constraints.

Graphs with type annotations can be used as an alternative to the use of typing morphisms, to associate type information with elements.
In particular, with reference to Figure~\ref{fig:construction}(a), let $G$ be a graph typed on the type graph $TG$ (via the $\mathit{tp}^G$ morphism, so that each element of $G$ has a single type) and let $H(G,\mathit{tp}^G)$ (in the following simply denoted by $H$ where $G$ and $\mathit{tp}^G$ are clear from the context) be the minimal graph with type annotation such that:

\begin{enumerate}
	\item both $G$ and $TG$ have isomorphic (disjoint) immersions in $H$, respectively called $G^\prime$ and $TG^\prime$, defined by the morphisms ${f_g}$ and ${f_t}$;
	\item each element in $G^\prime$ is annotated with exactly one type element in $TG^\prime$;
	\item for each element $x$ in $G$, $annType(f_g(x))=f_t(\mathit{tp}^G(x))$.
\end{enumerate}

$H$ is unique up to type-annotation preserving isomorphisms. Considering the coproduct $G\oplus{TG}$ of $G$ and $TG$ and the universal property of coproducts, there is a unique morphism
$f_g\oplus{f_t}:G\oplus{TG}\rightarrow{H}$ such that the triangles in the diagram of Figure~\ref{fig:construction}(a) commute. Moreover, $H$ preserves by construction the type information provided by the $\mathit{tp}^G$ morphism.


\begin{figure}[htb]
	\centering
\subfigure{
 \makebox{
    \xymatrix{
        G \ar@{^{(}->}[rr]_{in_1} \ar@{^{(}->}[drr]|{f_g} \ar@/^1pc/[rrrr]^{\mathit{tp}^{G}}
				& {}
				&
				G\oplus{TG} \ar[d]|{f_g\oplus{f_t}}
				& {}
				& TG \ar@{_{(}->}[ll]^{in_2} \ar@{_{(}->}[dll]|{f_t} \\
       {} & {} & H(G,tp^G) & {} & {}
    }
 }
}
%
\subfigure{
 \makebox{
    \xymatrix@R=0.5pc{
        TG & {} & {} TypeCorr \ar[ll]_{corrType1} \ar[drr]^{corrType2}& {} & {} \\
        {} & {} & {}  & {} & H(G,tp^G) \\
        G \ar[uu]^{\mathit{tp}^G} & {} & InstCorr  \ar[ll]^{corrInstance1} \ar[urr]_{corrInstance2} & {} & {}
    }
 }
}
\caption{The construction of a type-annotated graph (a), and the triple graph metamodel (b).}
\label{fig:construction}
\end{figure}
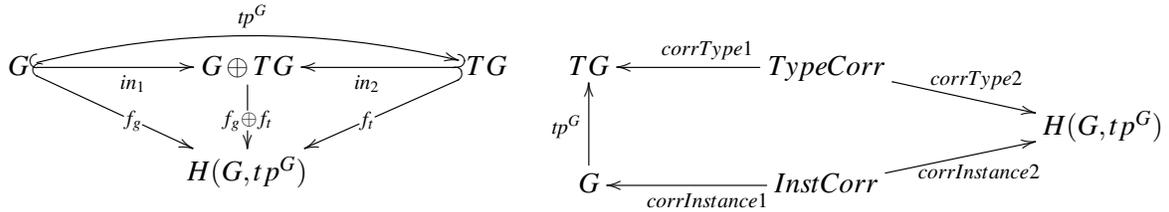

A graph is \emph{correct under type annotation} if all the type annotations for edges are consistent with the restrictions on the type annotations for their sources and targets and all the box contents are consistent with the restrictions on the type annotations for the content of the box type. We say that a morphism is
\emph{type-annotation-preserving} if each element annotated with some type in its source is annotated with the same type in its target.
We call $\mathbf{AT}$ the category of graphs with type annotation
(not attributed) and
%
%
graph morphisms mapping type-annotated graphs into type-annotated graphs, and
$\mathbf{AT}_1$ its restriction to the case where each element is annotated with exactly one single type.
Let $\mathbf{TG}$ be the category of typed graphs (not attributed)
(technically the category of graphs over a type graph) with
%
%
graph morphisms mapping typed graphs into typed graphs.
The construction of Figure~\ref{fig:construction}(a) induces 
%
%
a functor
$\mathtt{typeAnn}:\mathbf{TG}\rightarrow{\mathbf{AT}}_1$ where
$\mathtt{typeAnn}_{Ob}$ maps each object $G$ of $\mathbf{TG}$ into the object
$H(G,\mathit{tp}^G)$ of $\mathbf{AT}_1$ and $\mathtt{typeAnn}_{Hom}$ maps each
%
%
morphism
$m:G\rightarrow{G^\prime}$ of $\mathbf{TG}$ into a morphism
$m^\prime:H(G,\mathit{tp}^G)\rightarrow{H^\prime}(G^\prime,\mathit{tp}^{G^\prime})$ such that for each element $x$ of $G$,
$f_{g^{\prime}}(m(x))=m^\prime(f_g(x))$.
The correctness of the construction is given by Lemmas~\ref{th:lem1} and~\ref{th:lem2}.

\begin{lemma}\label{th:lem1}
If a graph $G\in{Ob(\mathbf{TG})}$ is correct under typing morphisms, then its image under
$\mathtt{typeAnn}_{Ob}$ is correct under type annotation.
\end{lemma}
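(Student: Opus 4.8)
The plan is to verify the two defining clauses of correctness under type annotation for $H = H(G,\mathit{tp}^G)$ directly, by pulling each annotated instance element back along $f_g$ to $G$ and then combining three facts: the construction identity $annType(f_g(x))=f_t(\mathit{tp}^G(x))$; the fact that the immersions $f_g$ and $f_t$ are graph morphisms and hence preserve $s$, $t$ and $cnt$; and the fact that $\mathit{tp}^G$ is a typing morphism and hence satisfies the source, target and content preservation laws recalled in Section~\ref{sec:background}. Since $H$ lies in $\mathbf{AT}_1$, every instance element carries exactly one annotation, so $annType(\cdot)$ may be treated as a function throughout, and the annotated instance edges (respectively boxes) are exactly the images under $f_g$ of the edges (respectively boxes) of $G$.

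For the edge clause, let $e' = f_g(e)$ be any annotated edge of $H$. Since $f_g$ is a morphism, $s_H(e') = f_g(s_G(e))$, so by the construction identity $annType(s_H(e')) = f_t(\mathit{tp}^G(s_G(e)))$. On the other hand $annType(e') = f_t(\mathit{tp}^G(e))$, and applying in turn the source law for the morphism $f_t$ and the source-preservation law $\mathit{tp}^G(s_G(e)) = s^T(\mathit{tp}^G(e))$ for $\mathit{tp}^G$ gives $s^T(annType(e')) = f_t(s^T(\mathit{tp}^G(e))) = f_t(\mathit{tp}^G(s_G(e)))$. The two right-hand sides agree, so $annType(s_H(e')) = s^T(annType(e'))$, which is exactly source consistency; the target case is identical with $t$ and $t^T$ in place of $s$ and $s^T$.

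For the box clause, take an annotated box $b' = f_g(b)$ and any $x'$ in $cnt_H(b')$. By minimality of $H$ this content is exactly $f_g(cnt_G(b))$, so $x' = f_g(x)$ for some $x \in cnt_G(b)$, whence $annType(x') = f_t(\mathit{tp}^G(x))$. The content law for $\mathit{tp}^G$ gives $\mathit{tp}^G(x) \in cnt^T(\mathit{tp}^G(b))$, and transporting this along the morphism $f_t$ yields $f_t(\mathit{tp}^G(x)) \in cnt^T(f_t(\mathit{tp}^G(b)))$, i.e. $annType(x') \in cnt^T(annType(b'))$, as required.

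I expect the work to be genuine care rather than a deep obstacle, concentrated at the boundary between the instance copy $G'$ and the type copy $TG'$ inside $H$. Specifically, one must justify that every annotated element, and every element occurring in the content of an instance box, really originates from $G$ via $f_g$ (so that the typing-morphism laws can be invoked), and that $f_t$, being an isomorphism onto $TG'$, faithfully transports $s^T$, $t^T$ and $cnt^T$ from $TG$ to $TG'$, so that applying these functions inside $TG'$ is legitimate. Both points follow from the minimality clause in the definition of $H$ together with $f_g$ and $f_t$ being (iso)morphisms; once they are pinned down, the proof reduces to the two short diagram chases above.
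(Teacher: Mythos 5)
Your proof is correct and follows exactly the route the paper intends: the paper's own proof is only a one-line sketch declaring the result ``immediate from the construction,'' and your argument simply carries out that verification explicitly, chasing the identity $annType(f_g(x))=f_t(\mathit{tp}^G(x))$ through the morphism laws for $f_g$, $f_t$ and $\mathit{tp}^G$ to establish the edge and box clauses of correctness under type annotation. The care you flag about elements originating from $G$ via $f_g$ and about $f_t$ transporting $s^T$, $t^T$ and $cnt^T$ onto $TG'$ is precisely what the paper's sketch leaves implicit, so your write-up is a faithful expansion rather than a different approach.
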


\begin{proof}[Sketch] The proof is immediate from the construction, considering the constraints for well-formedness discussed above.
\end{proof}


\begin{lemma}\label{th:lem2}
If a morphism $m\in{Hom(\mathbf{TG})}$ is type preserving then
$\mathtt{typeAnn}_{Hom}(m)$ is type-annotation-preserving.
\end{lemma}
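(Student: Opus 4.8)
The plan is to unwind both hypotheses and the conclusion into statements about the annotation function $annType$, and then to verify the required equality by a short chase along the commuting triangle of Figure~\ref{fig:construction}(a). First I would fix notation following the construction: write $H=H(G,\mathit{tp}^G)$ with its two immersions $f_g:G\rightarrow H$ and $f_t:TG\rightarrow H$, and $H^\prime$ for $H(G^\prime,\mathit{tp}^{G^\prime})$ with immersions $f_{g^\prime}:G^\prime\rightarrow H^\prime$ and $f_{t^\prime}:TG\rightarrow H^\prime$ (both objects are typed over the same $TG$). By construction the only annotated elements of $H$ are the images $f_g(x)$ of instance elements $x\in G$, and their annotation is $annType(f_g(x))=f_t(\mathit{tp}^G(x))$; the analogous identity holds in $H^\prime$. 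The hypothesis that $m$ is type preserving is the equality $\mathit{tp}^{G^\prime}\circ m=\mathit{tp}^G$, and the defining property of $m^\prime=\mathtt{typeAnn}_{Hom}(m)$ is $m^\prime(f_g(x))=f_{g^\prime}(m(x))$ for every $x\in G$.

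Next I would make explicit how $m^\prime$ acts on the type immersion. Since the two copies $f_t(TG)\subseteq H$ and $f_{t^\prime}(TG)\subseteq H^\prime$ are immersions of one and the same type graph, functoriality forces $m^\prime$ to send $f_t(\tau)$ to $f_{t^\prime}(\tau)$ for each type element $\tau$; this is precisely the canonical identification of the two type copies against which the phrase ``the same type'' in the conclusion has to be read.

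The verification is then immediate. I would take an annotated element $y=f_g(x)$ of $H$, so that $annType(y)=f_t(\mathit{tp}^G(x))$. Applying the annotation formula in $H^\prime$ to $m^\prime(y)=f_{g^\prime}(m(x))$ gives $annType(m^\prime(y))=f_{t^\prime}(\mathit{tp}^{G^\prime}(m(x)))$, and substituting the type-preservation equality $\mathit{tp}^{G^\prime}(m(x))=\mathit{tp}^G(x)$ rewrites this as $f_{t^\prime}(\mathit{tp}^G(x))$, which is exactly $m^\prime(annType(y))$. Hence $m^\prime$ carries each annotated element to one with the same type annotation, as required.

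The only genuinely delicate point is the bookkeeping of the second step: $H$ and $H^\prime$ contain disjoint copies of $TG$, so ``same type'' cannot mean literal equality but must be interpreted through the canonical identification $f_{t^\prime}\circ f_t^{-1}$. Once this identification is pinned down as the action of $m^\prime$ on the type part, the conclusion is a one-line consequence of the type-preservation of $m$, and everything else is routine.
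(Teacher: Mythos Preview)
Your argument is correct and follows essentially the same route as the paper: fix $m'=\mathtt{typeAnn}_{Hom}(m)$, use the defining identity $m'(f_g(x))=f_{g'}(m(x))$ together with the annotation formula $annType(f_g(x))=\mathit{tp}^G(x)$ (up to the immersion) and the type-preservation equality $\mathit{tp}^{G'}(m(x))=\mathit{tp}^G(x)$, and chain them to obtain $annType(m'(f_g(x)))=annType(f_g(x))$. The only difference is that you make explicit the identification of the two copies of $TG$ via $f_{t'}\circ f_t^{-1}$, whereas the paper tacitly writes $annType(f_g(x))=\mathit{tp}^G(x)$ and suppresses the immersions; this is bookkeeping rather than a different idea.
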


\begin{proof}
Let $m:G\rightarrow{G^\prime}$ be a morphism in $\mathbf{TG}$ and for each element $x\in{G}$, let $\mathit{tp}^G(x)$ be its type, with
$\mathit{tp}^G(x)=\mathit{tp}^{G^\prime}(m(x))$ by definition of type-preserving morphism. Let
$\mathtt{typeAnn}_{Hom}(m)=m^\prime:H\rightarrow{H^\prime}$,
with $H=\mathtt{typeAnn}_{Ob}(G)$ and
$H^\prime=\mathtt{typeAnn}_{Ob}(G^\prime)$. Since
$m^\prime(f_g(x))=f_{g^\prime}(m(x))$
%
%
and we have $\mathit{tp}^G(x)=annType(f_g(x))$ for any element $x$ in $G$
(or $\mathit{tp}^{G^\prime}(x)=annType(f_{g^\prime}(x))$ for $x$ in $G^\prime$), we conclude that
$annType(m^\prime(f_g(x)))=annType(f_{g^\prime}(m(x)))=\mathit{tp}^{G^\prime}(m(x))=\mathit{tp}^G(x)=annType(f_g(x))$.
\end{proof}

Since all of the morphisms in the diagram of Figure~\ref{fig:construction}(a)
%
%
preserve all elements from $G$ and $TG$ and $H$ presents only additional information, the original graphs can be recovered from within $H$ and the original typing can be reconstructed from the annotation information. Conversely, for each type-annotated graph $H^\prime$, in which each element is type-annotated with only one type, it is possible to obtain a typed graph $H$ such that its image under the functor
$\mathtt{typeAnn}$ is isomorphic to $H^\prime$.

The discussion above can be extended to the case where $H^\prime$ has elements without any type annotation, or has elements with multiple type annotations. In this case there exists a set $\mathbf{H}$ of typed graphs such that for each $H\in\mathbf{H}$ $\mathtt{typeAnn}_{Ob}(H)$ is only a subgraph of $H^\prime$.

\subsection{Correspondence patterns}\label{sec:tripleGraph}
The construction of Figure~\ref{fig:construction}(a) can also be interpreted in terms of triple patterns~\cite{SK08}
%
%
with reference to the diagram of Figure~\ref{fig:construction}(b). We define a
\emph{composition of triple graphs on a common target}, whose correspondence  graphs,
\emph{TypeCorr} and \emph{InstCorr}, respectively relate the type, $TG$ and instance $G$ graphs, seen as distinct source graphs, to the type-annotated graph, $H$, seen as the common target graph.

A \emph{triple pattern} is a construct $\mathit{TriPatt}=(\mathit{TriP},\Gamma)$,
where $\mathit{TriP}$ is a triple graph and $\Gamma$ is a formula on the elements in $\mathit{TriP}$.
We say that $T\mathit{TriP}$ is \emph{satisfied} by a triple graph $\mathit{TriGrph}$ if each component of $\mathit{TriP}$ has an injective match in the corresponding component of $\mathit{TriGrph}$, preserving the domains and images of the correspondence morphisms and satisfying $\Gamma$. The \emph{composition of triple patterns} is defined similarly to the composition of triple graphs.
A composition of triple patterns is \emph{satisfied by a composition} of triple graphs with a common target if each single triple pattern is satisfied by the corresponding triple graph and all the morphisms in the composition are preserved.

In this setting, Figure~\ref{fig:tripleGraph} presents two compositions of triple patterns relating a typed graph with a  
\begin{wrapfigure}{r}{0.42\textwidth}
\centering
\includegraphics[width=6cm]{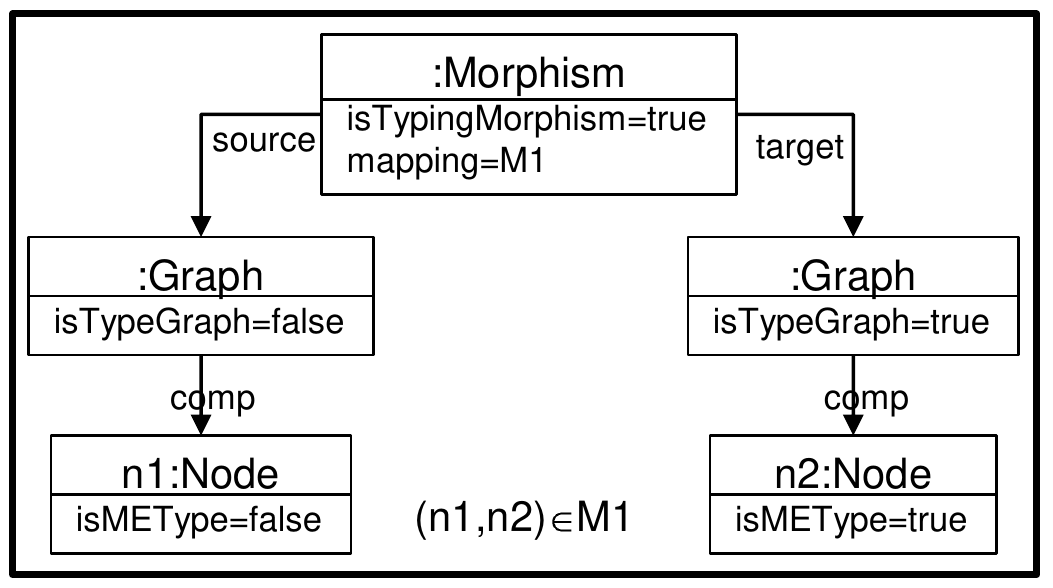}
  \caption{The graph pattern denoted by the \texttt{type} arrow in Figure~\ref{fig:tripleGraph}.}
  \label{fig:tripleGraphPattern}
\end{wrapfigure}
type-annotated graph. All the graphs are typed on the metamodel of
Figure~\ref{fig:metamodel}, the correspondence graph is a discrete graph and the dashed lines relate corresponding elements in some morphism map.
In particular, the \texttt{type} arrow is a shortcut for indicating that the instance and the type elements constitute a pair in the map of the typing morphism between an instance and a type graph, as shown in the composition of triple patterns of
Figure~\ref{fig:tripleGraphPattern} for the case of nodes. Analogous patterns are defined for the different types of graph elements. The relation between the triple patterns and the construction of Section~\ref{sec:typing} is expressed via Theorem~\ref{th:correspondences}. We call $AnnTypePatt$ the set of all these compositions of triple patterns.

\begin{figure}[htb]
\centering
\subfigure{\includegraphics[width=12cm]{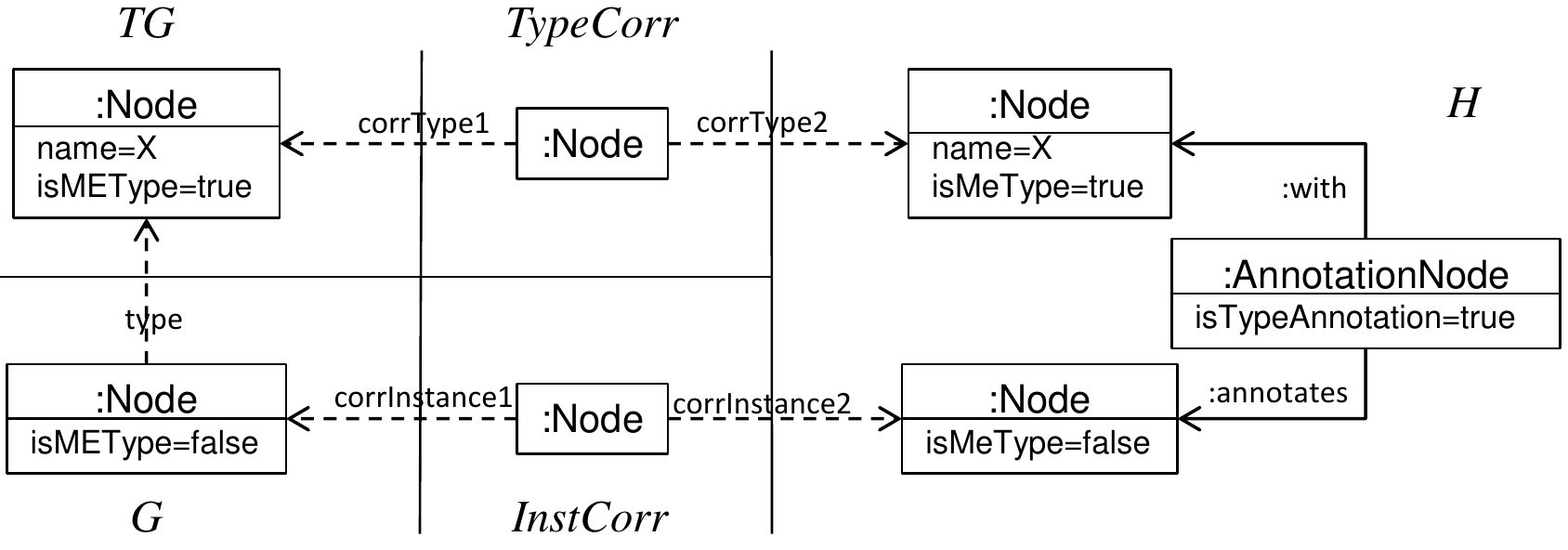}}\\
\rule{12cm}{0.6pt}\\
\subfigure{\includegraphics[width=12cm]{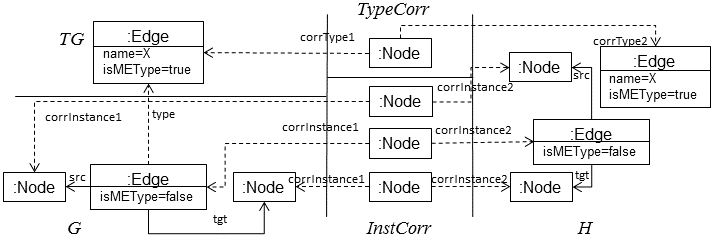}}\\
  \caption{Two compositions of triple patterns describing the relation between typed elements and elements with type annotation for nodes (top) and edges (below).}
  \label{fig:tripleGraph}
\end{figure}

\begin{theorem}\label{th:correspondences}
Let $G$, $TG$ and $H$ be as in Section~\ref{sec:typing}. Then there exists a composition $CMP$ of the triple graphs
$\mathit{TriType}=TG{\xleftarrow{corrType1}}\mathit{TypeCorr}{\xrightarrow{\mathit{corrType2}}}H$,
$\mathit{TriInst}=G{\xleftarrow{\mathit{corrInstance1}}}\mathit{InstCorr}{\xrightarrow{\mathit{corrInstance2}}}H$,
on the common pattern $H$,
such that $CMP$ satisfies all the patterns in $\mathit{AnnTypePatt}$.
\end{theorem}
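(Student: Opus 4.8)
The plan is to build the composition $CMP$ directly out of the two immersion morphisms $f_g$ and $f_t$ that already define $H$, and then to check pattern-by-pattern that the resulting triple structure satisfies every composition of triple patterns in $AnnTypePatt$. First I would realise the correspondence graphs as discrete graphs recording the immersions. For $\mathit{InstCorr}$ I take one correspondence node $c_x$ per element $x$ of $G$, with $corrInstance1(c_x)=x$ and $corrInstance2(c_x)=f_g(x)$; symmetrically, for $\mathit{TypeCorr}$ I take one correspondence node $d_\tau$ per element $\tau$ of $TG$, with $corrType1(d_\tau)=\tau$ and $corrType2(d_\tau)=f_t(\tau)$. Since $f_g$ and $f_t$ are injective immersions into the disjoint copies $G^\prime$ and $TG^\prime$ inside $H$, these assignments are well-defined, and the two triple graphs $\mathit{TriInst}$ and $\mathit{TriType}$ share the common target $H$ exactly as required by the definition of composition of triple graphs on a common target.

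Next I would assemble $CMP$ from $\mathit{TriInst}$ and $\mathit{TriType}$ over $H$ and argue that it is a legal composition: because $G^\prime$ and $TG^\prime$ are disjoint in $H$, the images of $corrInstance2$ and $corrType2$ do not interfere, so the gluing over $H$ introduces no conflicts and all the morphisms appearing in the composition commute by construction.

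The core of the argument is the per-sort verification. For the node pattern shown in Figure~\ref{fig:tripleGraphPattern}, and for the analogous edge and box patterns, I would exhibit the required injective matches: a type element $\tau$ of $TG$, an instance element $x$ of $G$ linked to $\tau$ by the \texttt{type} arrow, and in $H$ the element $f_g(x)$ joined to $f_t(\tau)$ through an annotation node via edges of type \texttt{annotates} and \texttt{with}. These matches are forced by the correspondence nodes introduced above. The formula $\Gamma$ asserts precisely that the pair selected by the \texttt{type} arrow, namely $\mathit{tp}^G(x)=\tau$, coincides with the annotation recorded in $H$; this is exactly property~3 of the construction of $H$, i.e. $annType(f_g(x))=f_t(\mathit{tp}^G(x))=f_t(\tau)$, so $\Gamma$ holds. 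The existence and shape of the annotation subpattern (one annotation node, one \texttt{annotates} edge, one \texttt{with} edge, with correctly sorted endpoints) follow from the well-formedness constraints of Figures~\ref{fig:annTypeToNodeType} and~\ref{fig:notTypedTwice} together with Lemma~\ref{th:lem1}, which guarantees that $H$ is correct under type annotation.

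I expect the main obstacle to be the compatibility clause in the definition of satisfaction of a composition of triple patterns: it is not enough that each single triple pattern is matched in its own triple graph; the matches must agree on the shared target $H$ and preserve all the morphisms of the composition simultaneously. Concretely, I must check that the image of $x$ chosen by the instance-side match and the image of $\tau$ chosen by the type-side match are joined in $H$ by the very same annotation node, so that the two triple patterns see a single consistent annotation rather than two unrelated ones. This amalgamation is again underwritten by property~3, which ties $f_g(x)$ and $f_t(\tau)$ to one annotation whenever $\mathit{tp}^G(x)=\tau$, and by the \texttt{notTypedTwice} constraint, which rules out duplicate annotations and hence makes the matched annotation node unique. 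Everything else is routine bookkeeping over the sorts of graph elements.
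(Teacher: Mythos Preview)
Your proposal is correct and follows essentially the same approach as the paper: both construct $\mathit{InstCorr}$ and $\mathit{TypeCorr}$ as discrete graphs with one correspondence node per element of $G$ and $TG$ respectively, mapped via $f_g$ and $f_t$ to their copies in $H$, and both appeal to the defining properties of $H$ to obtain the matches for the patterns. Your write-up is actually more thorough than the paper's own proof, which does not explicitly isolate the amalgamation step over the shared target $H$ nor invoke Lemma~\ref{th:lem1} and the \texttt{notTypedTwice} constraint; it simply appeals to ``the construction of $H$'' for uniqueness of copies and of the annotation pattern.
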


\begin{proof}
Since $G$ is a graph typed on $TG$, each element in $G$ is associated via $type$ to a unique element in $TG$, according to its sort, so that each element in $G$ provides a unique match for the $G$ component of the source graph in the triple pattern for that sort and the corresponding images of the $\mathit{tp}^G$ morphism provide the matches for the composition of the source graphs. By construction of $H$, all elements in $G$ and $TG$ have a unique copy in $H$, and all copies of elements in $G$ participate in exactly one type annotation, thus providing a unique match for the common target graph. The  graphs $TypeCorr$ and $InstCorr$ are then constructed with a correspondence node for each element in $TG$ and $G$, respectively and the correspondence morphisms relate this node with the copy of the corresponding element in $H$.
\end{proof} 

\subsection{Inheritance}\label{sec:inheritance}
The notion of type annotation can also be extended to consider inheritance.
We assume here that types are organised into a single inheritance hierarchy (whilst permitting the possibility of multiple typing).
Let $(T,\le)$ be a partial order of types, where $T_1\le{T_2}$ indicates that $T_1$ inherits from $T_2$.
We assume the presence of a type $\top$ such that $T_i\le\top$ for each $T_i$.
%
%
If needed, one can organise the set of types into partitions for node, edge and box types.
To preserve the inheritance relation, we add a constraint \texttt{inheritsAnnotation} stating that if an element is annotated with type $T_1$,
it is also annotated with type $T_2$.
Managing type information via annotation also allows the removal of information, but in this case a special process is needed. Let
${T_i}\le\dots\le{T_j}\le\dots\le\top$ be a chain in the partial order,
and let $e$ be an element annotated with $T_i$ (hence with all the types from $T_{i+1}$ to $\top$, including $T_j$).
Removing the annotation with $T_j$ requires also the removal of all of the annotations with types from $T_i$ to $T_j$,
while leaving the annotation with $T_{j+1}$ (hence with all the types from $T_{j+1}$ to $\top$). This can be achieved via the use of boxes for which \texttt{isTypeBundle=true}. In this case, the constraints discussed in Section \ref{sec:typing} are substituted with constraints requiring that all elements be annotated with boxes of this kind, each box containing the types in an inheritance chain. Additional constraints specify that the content of such a box is made of types of the correct sort.
Removing an annotation with type $T_j$ would then amount to substituting an annotation with a box containing the complete chain above with one containing the chain starting from $T_{j+1}$. We assume that annotations with a box containing only $\top$ can never be removed.

\section{Managing dynamic typing}\label{sec:dynamic}
We model dynamic typing in the context of type annotations and discuss its impact with respect to constraints associated with a type. We assume that constraints involving types are of three forms, schematised in
Figure~\ref{fig:dynamicConstraints}, in the form of pattern morphisms,
using a compact notation for pattern representation from~\cite{DBLP:journals/fuin/BottoniGL16}.
A pattern $\pi$ is given by a collection of graphs
$\mathbf{G}=\{G_1,\dots,G_n\}$ and a collection of morphisms
$M_\pi=\{m^{\{i,j\}}_\pi:G_i\rightarrow{G_j}\mid{G_i},{G_j}\in\mathbf{G}\}$ organised into a tree structure rooted in $G_1$ and such that each $G_i\in\mathbf{G}$ is involved in at least one morphism in $M_\pi$.
Each graph is represented as a (named) region, and the tree structure is reproduced by the nesting of regions. We say $\pi$ is satisfied by a graph $G$, denoted $G\models\pi$, if there exists a collection, $S_{\pi,P}$, of morphisms from each $G_i\in\mathbf{G}$ into $G$ preserving the image of each morphism in $M_\pi$. A pattern morphism $m_\Pi:\pi\rightarrow\pi^\prime$ is a collection of injective morphisms from graphs in $\mathbf{G}$ to graphs in $\mathbf{G}^\prime$, preserving the tree structure and the image of each morphism.

Under this notion, the first form (top) requires that an element involved in some specific pattern be annotated with some specific type. The pattern in the premise can possibly include requests on types.
The second form (middle) requires that if a certain pattern of elements of some given types exist, it must be somehow related to an element annotated with some specific type.
The third form (bottom) requires that if an element is typed in a given way, it must be connected with some pattern.
All of these forms can be easily extended to include in the conclusion some additional pattern (not including requests on types). The case in which there are further requests on the presence of typed elements in the conclusion can be dealt with by cascading constraints, so that the conclusion of one constraint becomes the premise of another one, which requires an additional element with some type annotation.

\begin{figure}[htb]
	\centering
\subfigure{\includegraphics[height=2.5cm]{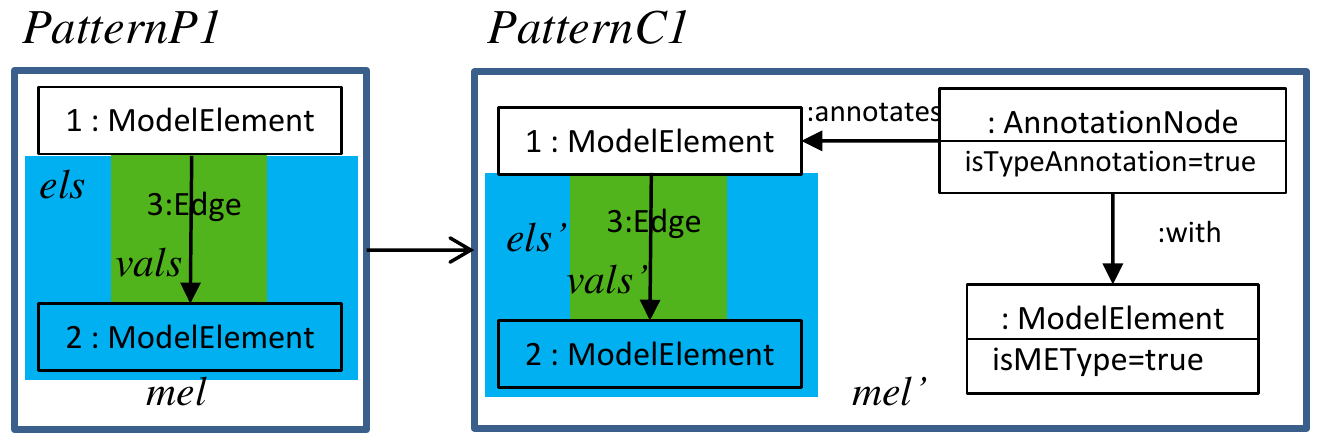}}
\subfigure{\includegraphics[height=2.5cm]{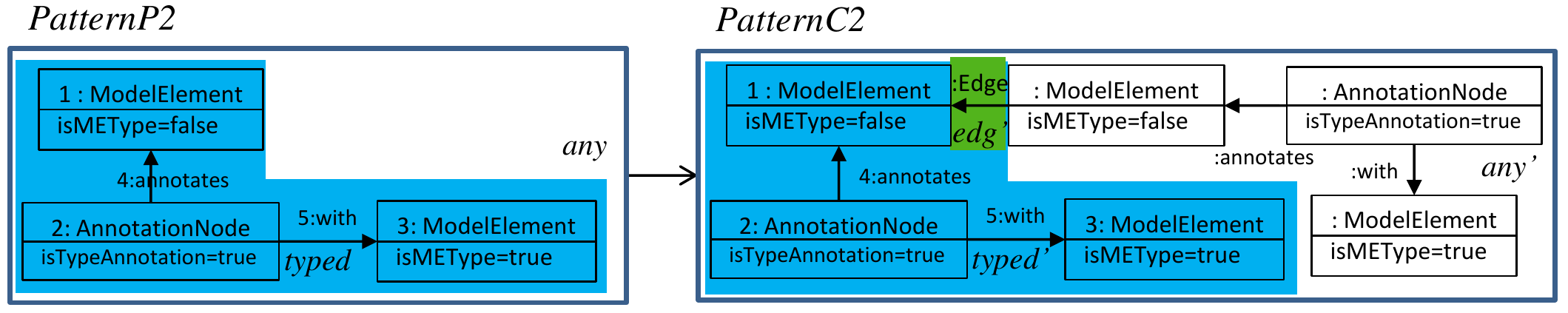}}
\subfigure{\includegraphics[height=2.5cm]{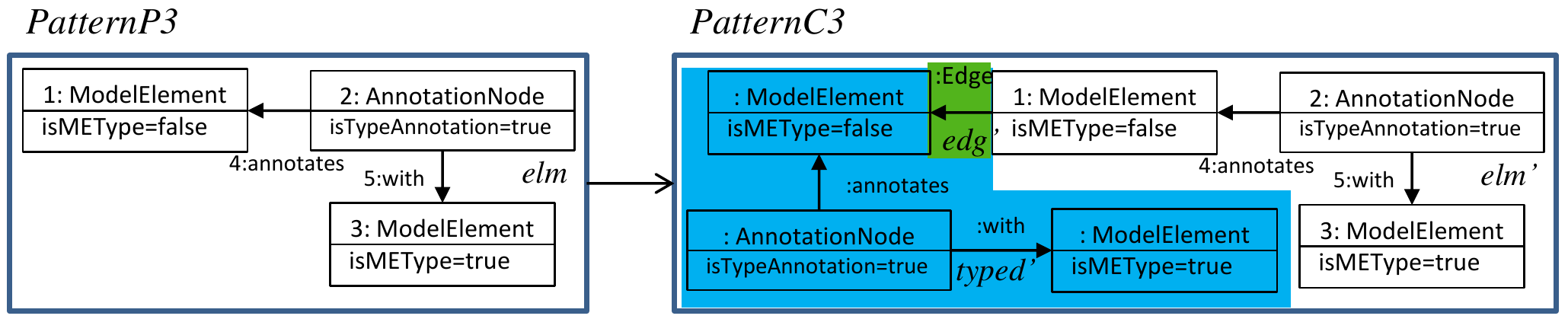}}
\caption{Generic forms of constraints: requiring a specific type for an element involved in a pattern (top); requiring the presence of an element typed in a specific way (middle); requiring the presence of some pattern related to an element typed in a specific way (bottom).}
	\label{fig:dynamicConstraints}
\end{figure}

We also consider that all the rules for changing type annotation are of the form shown in Figure~\ref{fig:changeNodeType}, where only the $L$ and $R$ parts of the rule scheme are given, the $K$ part being formed only by the model elements and the type nodes. Analogous schemes hold for changing the types of edges and boxes.

\begin{figure}[htb]
	\centering
		\includegraphics[width=10cm]{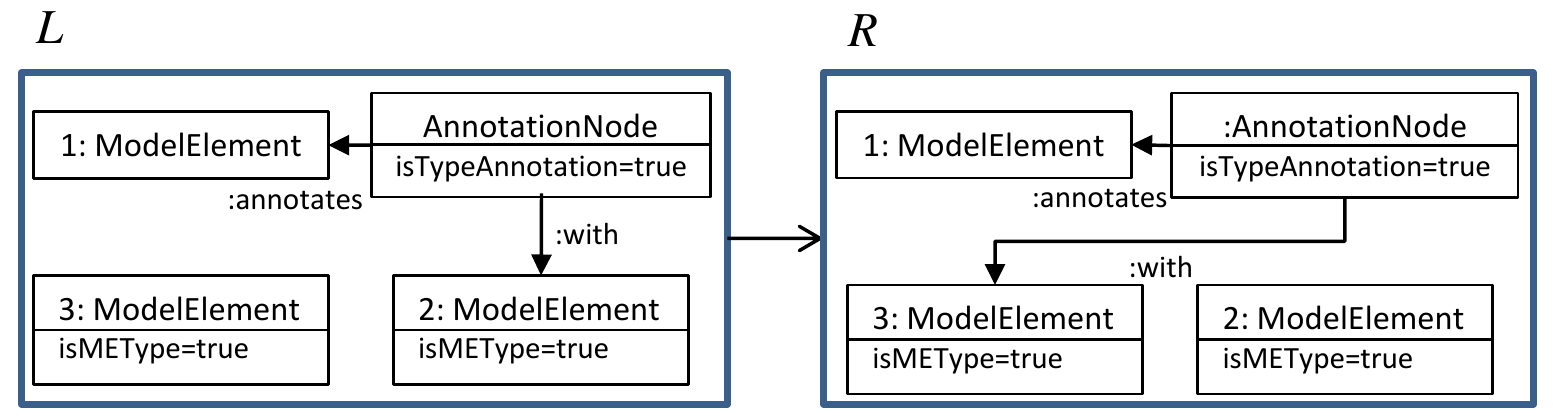}
	\caption{The general form \texttt{changeNodeType} for rules changing a type annotation for a node.}
	\label{fig:changeNodeType}
\end{figure}

Constraints describing the structure of types are not put in peril by a type change and changing the type of an element which was originally part of a pattern in a premise does not lead to any violation of the constraint, since the premise simply ceases to hold. However, constraints requiring that elements with some properties have specific types can cease to hold. In particular, changing the type annotation for an element can have an impact on the well-formedness of a graph in different ways:

\begin{itemize}
  \item Change the type annotation for an element which needed to be typed in some specific way due to a constraint of the first form.
  \item Change the type annotation for the only element whose typing makes a constraint of the second form satisfied.
  \item Change the type annotation of an element in a pattern, thus making the premise of a constraint of either form hold.
\end{itemize}

Therefore, we need a construction to identify the constraints which are violated after a type change
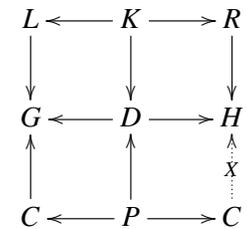
\begin{wrapfigure}{r}{0.3\textwidth}
\vspace{-15pt}
\centering
\makebox{ \xymatrix{
        L \ar[d] & K \ar[l] \ar[r] \ar[d] & R \ar[d] \\
        G & D \ar[l] \ar[r] & H \\
        C \ar[u] & P \ar[l] \ar[u] \ar[r] & C \ar@{.>}[u]|X
    }
}
\caption{Constraint violation due to rule application.}
\label{fig:repair1}
\end{wrapfigure}
occurs. The construction must then identify the repair action that would preserve correctness with respect to type annotation.
We assume that the composition of all constraints is satisfiable by finite graphs,
so that we do not have to consider infinite chains of consequences. However, the problem of knowing if a collection of positive constraints admits finite models is undecidable~\cite{DBLP:journals/fac/OrejasEP10}. A change in some type annotation might lead to a cascade of consequences, a problem over which a partial form of control can be achieved.
\vspace{-0.2pc}
%
Figure~\ref{fig:repair1} shows a situation where a constraint $m:P\rightarrow{C}$, which was previously holding,
is violated after application of a DPO rule $L\leftarrow{K}\rightarrow{R}$ to a graph $G$. Note that, due to the forms of constraints and rules, the match for the premise of the condition is preserved in the transformation.
When dealing with constraints of the type of Figure~\ref{fig:dynamicConstraints} (top) we have two possible solutions to preserve the correctness of the graph with respect to type annotation.
Since we need to disrupt the relation with the premise of the element whose type annotation is going to change,
thus making the premise not valid in the resulting graph, this can be achieved by extending the left-hand side of the rule with the pattern,
or by applying a repair action afterwards.

Figure~\ref{fig:repairs} (left) shows the first solution. The premise is added to the left-hand side, glueing the two graphs in the element $e$ (the one which is connected to the pattern and whose type annotation is going to change), while the restriction $\overline{P}$ of the premise, obtained by removing all the elements connecting $e$ to the pattern is added to the $K$ and $R$ parts, preserving the images of the morphism under the constraint. Figure~\ref{fig:repairs} (right) shows the second solution, consisting of applying a repair action removing the connection of the element with the pattern after applying the rule, matching it to the co-image of $R$ in $H$. The choice between the two constructions is typically domain-dependent.

\begin{figure}[htb]
\centering
\subfigure{\makebox{ \xymatrix{
        L\oplus_e{P} \ar[d] & K\oplus_e{\overline{P}} \ar[l] \ar[r] \ar[d] & R\oplus_e{\overline{P}} \ar[d] & \\
        G & D \ar[l] \ar[r] & H \\
        C \ar[u] & P \ar[ul] \ar@{.>}[u]|X \ar[l]
    }
    }
}
\subfigure{\makebox{ \xymatrix{
        L
        \ar[d]
        & K
        \ar[l] \ar[r] \ar[d]
        & R  \ar[dr]
        & &
       {P} \ar[dl]
  & \overline{P} \ar[l] \ar[r] \ar[d]
  & \overline{P} \ar[d]
  \\
        G & D
        \ar[l] \ar[rr]
        & & H & & D^\prime \ar[ll] \ar[r]
        & H^\prime \\
        C
        \ar[u]
        & P \ar[l] \ar[u] \ar[r]
        & C
        \ar@{.>}[ur]|X
        & & C
        \ar@{.>}[ul]|X
        & P
        \ar@{.>}[u]|X \ar[l]
}}}
\caption{Removing a match for the premise: during rule application (left); via a repair action (right).}
\label{fig:repairs}
\end{figure}
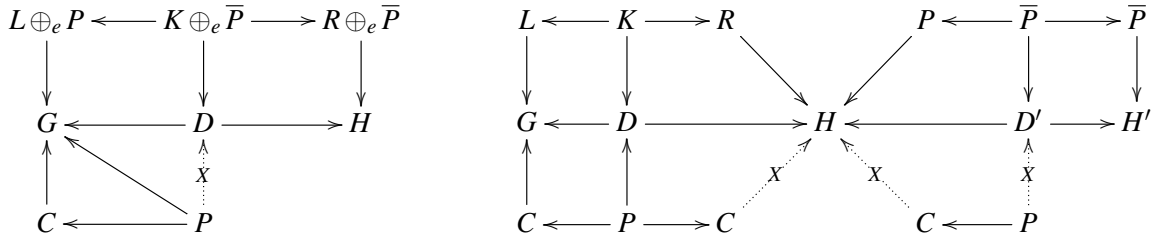

In both cases, $\overline{P}$ can be constructed as follows. Consider the set $S_{PatternP1,P}$ of morphisms identifying the maximal occurrence of $PatternP1$ from Figure~\ref{fig:dynamicConstraints} in a concrete graph $P$.  By restricting $PatternP1$ to the regions $mel$ and $els$, define the new pattern $PEls$ and the induced pattern morphism ('inclusion')  $m_{els}:PEls\rightarrow{PatternP1}$, as indicated in Figure~\ref{fig:constructionOfRepair}. For each morphism $f_i$ in $S_{PatternP1,P}$, call $\overline{P_i}$
\begin{wrapfigure}{r}{0.42\textwidth}
\makebox{
\xymatrix{
        PEls \ar@{=>}[d]_{S_{PEls,\overline{P}}} \ar@{~>}[r]^{m_{els}} & PatternP1 \ar@{~>}[r] \ar@{=>}[d]_{S_{PatternP1,P}} &
				PatternC1 \ar@{=>}[d]_{S_{PatternC1,C}} \\
        \overline{P} \ar[r] & P \ar[r] & C
    }
}
\caption{The construction of $\overline{P}$.}
\label{fig:constructionOfRepair}
\end{wrapfigure}
the maximal image under $f_i$ of $PEls$ in $P$ and $S_{PEls,\overline{P_i}}$ the induced ('vertical') morphism.
Now the desired graph $\overline{P}$ is obtained as the colimit object\footnote{The use of $\overline{P}$ in the constructions of Figure~\ref{fig:repairs} is conservative,
as one could disrupt the premise by selecting any of the $\overline{P_i}$.} of the diagram defined by all the $S_{PEls,\overline{P_i}}$.
%
%
It is to be noted that several constraints of the first form can require that the element be typed in a certain way.
When adopting the first solution, one should find the colimit of all the premises, with all the elements identified. When adopting the second solution, the pattern in each premise can be disrupted individually, and we would have a collection of repair (disrupting) rules which can be applied in any order after applying the rule changing the type annotation. 

For constraints of the second form, we have two cases. The first is when the element whose type is changed is the only one which makes the constraint satisfied. In this case, the premise can be used as a negative application condition, so that the change is not allowed, or a new element of the correct type must be created to maintain the connection with the pattern. In the second case the application of a type-changing rule creates an instance of the pattern in the premise, thus requiring that an element of the correct type be connected to the pattern. Again, one can create such an element via a repair action. Alternately, if there already exists an element which is in the connection to the pattern required by the conclusion, a repair action can add a type annotation to this element.
Since there are no negative constraints and there is no limit to the number of type annotations for an element (except that it cannot be annotated twice with the same type), the second solution is always feasible.

Constraints of the third form can be enforced with techniques analogous to those for the first form, either including the pattern in the $R$ part of the rule, or adding it through a repair action after rule application.
The problem whether cascading repair actions of the second type can lead to infinite expansions of the graph remains to be solved.

\section{Case studies}\label{sec:caseStudies}
We present a number of case studies showing how the use of type annotation can model some typical situations from real world policies or from software modeling. In the figures illustrating them, we adopt the following conventions, in order to reduce cluttering:

\begin{enumerate}
\item Model instances are presented via a UML-like notation, where the domain to which they belong is represented as a type name.
\item Unique properties
%
%
are represented as attributes or as literals from some primitive domain.
\item The types
%
%
used for annotating,
i.e. with
%
%
\texttt{isMEType=true},
are represented as instances of \texttt{Type}, the specific sort being clear from the context, with an indication of the value of the attribute \texttt{name}.
\item Nodes for which \texttt{isTypeAnnotation=true} are represented as instances of \texttt{TypeAnnotation}.
\end{enumerate}

\subsection{Gender change}\label{sec:sex}

Consider the constraint \texttt{DriverIsMale} of Figure~\ref{fig:driverIsMale} describing one aspect of the driving law in Saudi Arabia, which prescribes that only men can have a driving licence. We are therefore in the \texttt{Person} domain, where elements can be typed according to their gender: \texttt{Male} or \texttt{Female}. The constraint is
\begin{wrapfigure}{r}{0.42\textwidth}
	\includegraphics[width=6cm]{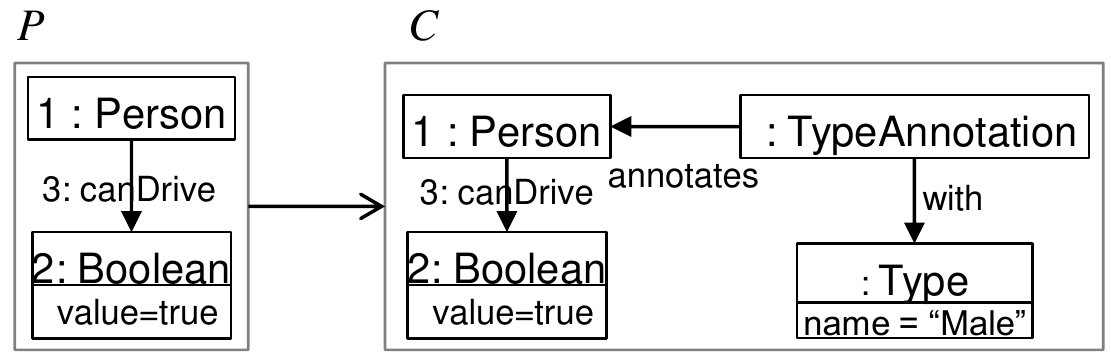}
	\caption{Constraint \texttt{DriverIsMale}}
%
	\label{fig:driverIsMale}
\end{wrapfigure}
 commonly overlooked for foreign nationals, but is, in principle, not waived.
Figure~\ref{fig:fromMaleToFemale} (left) describes the rule \texttt{FromMaleToFemale} modeling one direction of gender change.
A rule of this type is only possible with type annotation (or using gender as an attribute rather than a type), as the corresponding rule under a traditional typing morphism, where a person would actually be represented as an instance of either \texttt{Male} or \texttt{Female}, would require creating a new instance of \texttt{Female} and deleting the old instance of \texttt{Male}.

\begin{figure}[htb]
	\centering
	\subfigure{\includegraphics[width=10cm]{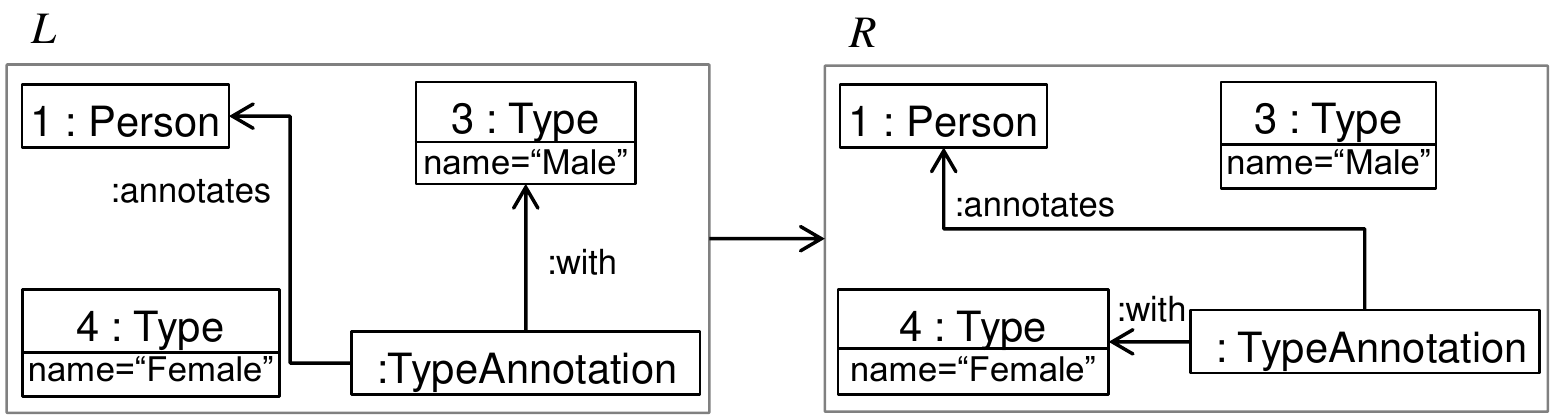}}
	\vspace{0.2cm}
	\vline
	\vspace{0.2cm}
	\subfigure{\includegraphics[height=2.75cm]{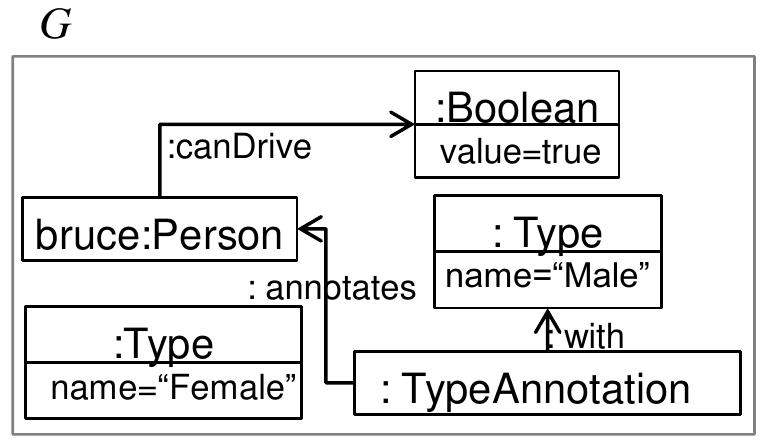}}
	\caption{Left: rule \texttt{FromMaleToFemale} changing one's gender. Right: a situation where gender change would lead to violate constraint \texttt{DriverIsMale}.}
	\label{fig:fromMaleToFemale}
\end{figure}

Now, if rule \texttt{FromMaleToFemale} is applied to the graph of
Figure~\ref{fig:fromMaleToFemale} (right), changing Bruce's gender, a situation violating
\texttt{DriverIsMale} is produced. To preserve correctness, proceeding according to one of the constructions from Figures~\ref{fig:repair1} or~\ref{fig:repairs} would remove the \texttt{canDrive} edge, either concurrently or after gender change, as in this case $\overline{P}$ is simply composed of the nodes \texttt{Person} and \texttt{Boolean}. Removing the edge after rule application would probably be more appropriate in modeling this situation.

\subsection{Classifications}\label{sec:classification}

Until 2006, the term \emph{minor planet} was adopted by the International Astronomical Union to indicate an astronomical object in direct orbit around the Sun that was neither a planet nor exclusively classified as a comet.
Whether an object is a planet, a minor planet or a comet depended on a number of measurable properties of that object.
After 2006, new classification criteria have been introduced and the terms \emph{dwarf planet} and
\emph{small Solar System body} (SSSB) have been used to sub-categorise minor planets. In particular, a planet is an element in the domain of
astronomical objects which:
\begin{inparaenum}[(1)]
\item is in orbit around the Sun,
\item has sufficient mass to assume hydrostatic equilibrium (a nearly round shape), and
\item has ``cleared the neighbourhood'' around its orbit.
\end{inparaenum}
The terms dwarf planet and SSSB are used for objects which fail the criteria for being a planet in specific ways. Figure~\ref{fig:astronomy} summarises the criteria for classifying an astronomical object as a planet, a dwarf planet or an SSSB in terms of our approach.

\vspace{-0.25cm}

\begin{figure}[htb]
\centering
\subfigure{\includegraphics[height=3.25cm]{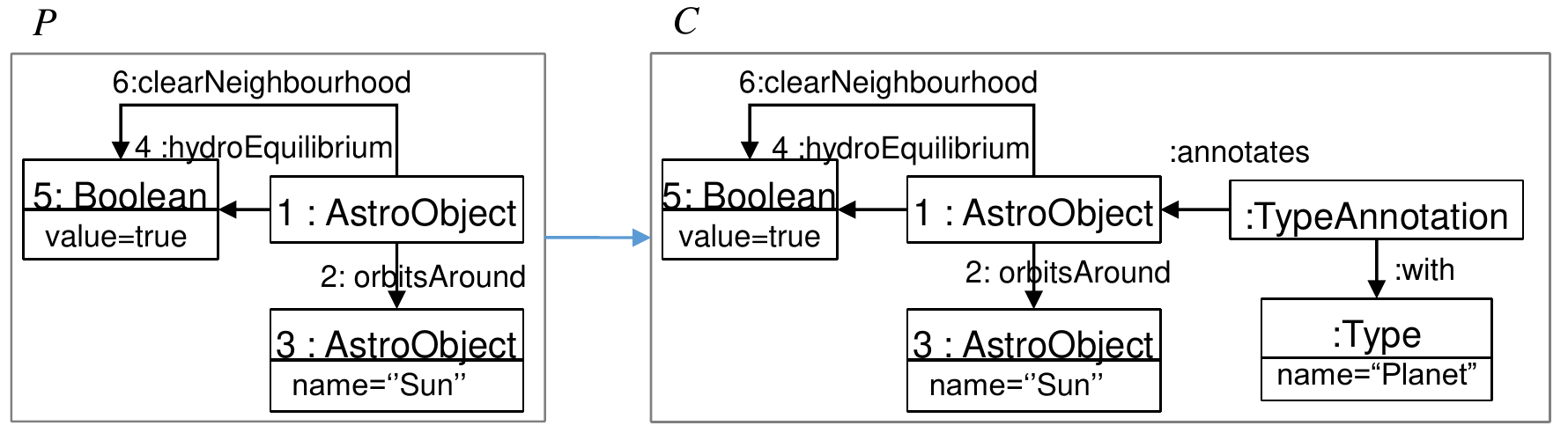}}
\subfigure{\includegraphics[height=3.5cm]{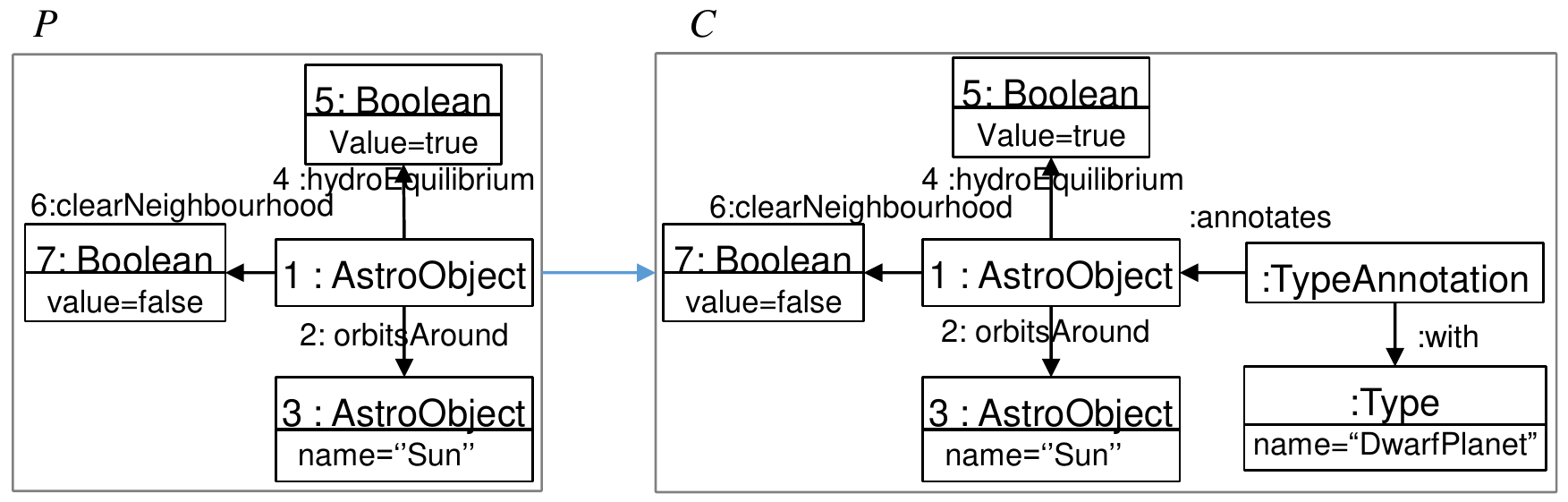}}
\subfigure{\includegraphics[height=3.5cm]{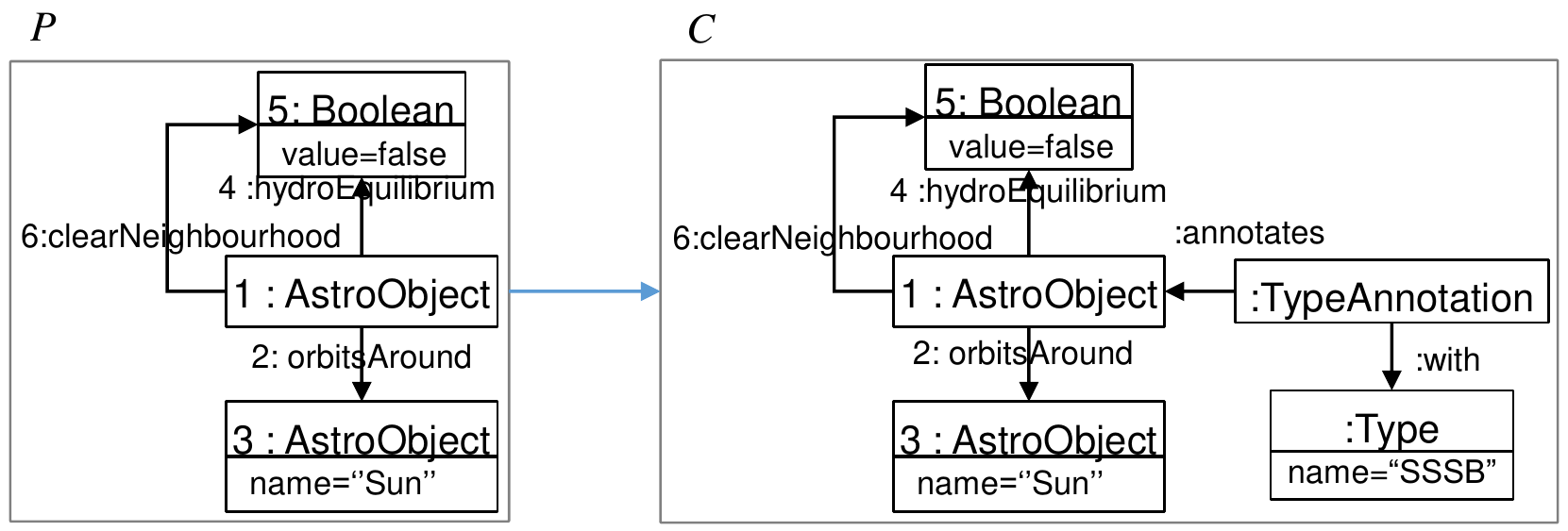}}
\caption{Constraints \texttt{isPlanet} (top), \texttt{isDwarfPlanet} (middle), and \texttt{isSSSB} (bottom) to classify an astronomical object as a planet, a dwarf planet, or an SSSB, respectively.}
	\label{fig:astronomy}
\end{figure}

As a consequence, a number of objects had to be reclassified into one of the categories. In particular, Pluto was no longer considered a planet and was classified as a dwarf planet. Figure~\ref{fig:fromPlanetToDwarf} presents the rule \texttt{fromPlanetToDwarf}, used to reclassify objects like Pluto.
This, and in general any form of classification based on measurement of some properties, is a case in which constraints of the first form are in use. A change of type can thus derive by the specification of the precise pattern of properties defining a type, or by the ability to perform more precise measures. It is also to be noted that for some bodies a dual classification as both minor planet and comet is admitted, pointing to the need for multiple typing.

\begin{figure}[htb]
\centering
	\includegraphics[height=3.5cm]{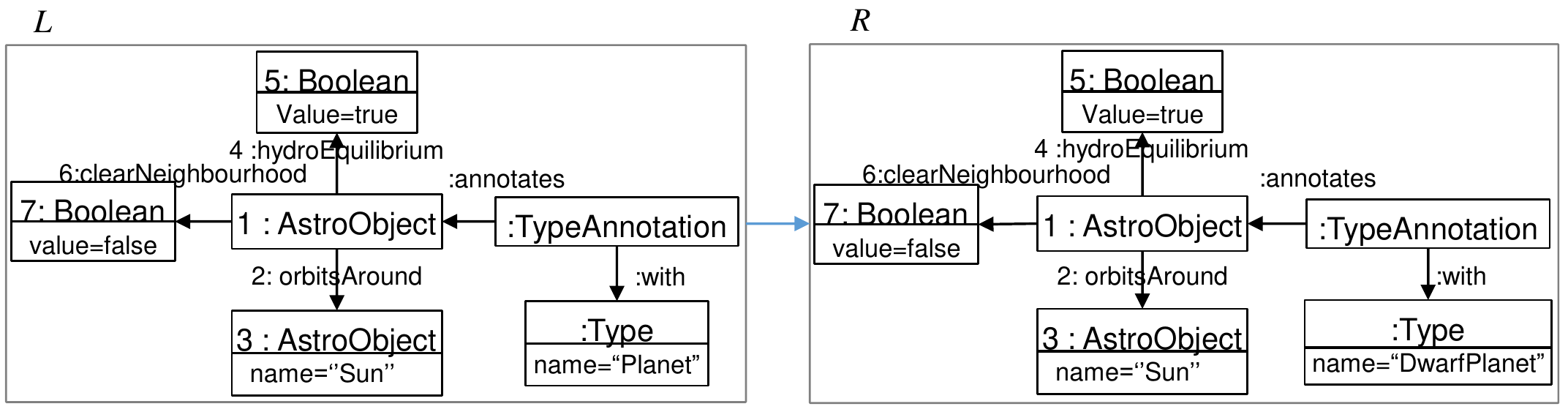}
	\caption{The rule \texttt{fromPlanetToDwarf} for reclassification of astronomical objects.}
	\label{fig:fromPlanetToDwarf}
\end{figure}

\subsection{Credentials}\label{sec:credentials}

Security credentials can be seen as a way to classify subjects into types, but it is not uncommon for individuals to belong to different types, without necessarily defining a supertype including all permissions of both types.  A manager of a particular governmental agency would have credentials necessary to access the resources necessary for the tasks assigned to her, and could also be a volunteer arbitrator in small claim litigations framework (within local county courts) with credentials to access the files of the cases assigned to her. There is no need to define a ``role'' to include both sets of functions to be able to type the subject uniquely. The manager decides, as a career move, to change governmental agency: the credentials needed for her former positions would have to be revoked, and new ones for her new position be reissued.  There is no need to revoke her arbitrator credentials (unless they have become incompatible with her new role). Hence part of the annotation remains unchanged while part is modified.

\subsection{Object-oriented programming and modeling}\label{sec:object}

In object-oriented programming, roles define different views of objects allowing the integration of different behaviours.
In contrast with~\cite{DBLP:journals/tois/GottlobSR96}, type annotations allow the definition of behaviours associated with roles without having to refer to a common root (apart from the top element in the type hierarchy).

Stereotypes in UML allow the addition of constraints on the instantiation of metaclasses, but they do not allow for the possibility of establishing specific relations between stereotyped elements, which is instead possible using constraints on elements with type annotation. Stereotypes are directly represented through type annotations, with the additional flexibility that we can bring the annotation at the level of instances and not only of metaclasses. 

\section{Conclusions}\label{sec:concls}
We have presented an approach to describe type information associated with elements of a graph in terms of annotations instead of morphisms. The resulting category of type-annotated graphs has a subcategory isomorphic to that of typed graphs, so that important properties of typing are preserved. We have also provided an interpretation of the relation between typed and type-annotated graphs via triple patterns, and described how inheritance can be managed in terms of annotations.
We have discussed how constraints can be defined to characterise properties required of type-annotated graphs and how to preserve type annotation correctness under transformations which change the type information for some elements. This allows the modeling of several situations where changes in the context require forms of dynamic typing to adapt to the new context.
We have considered only positive constraints, and the extension of the notion of typing with negative constraints, prescribing that elements of some type cannot be involved in some patterns, or that they cannot assume some specific values, will be the subject of future work.

In this line, one can also devise usages of typing annotations to deal with exceptions, as is often needed in ontologies and taxonomies. For example, human beings are uniquely characterised among primates by having 23 pairs of chromosomes. However, people with Down syndrome have an extra copy of chromosome 21, while women with Turner syndrome lack one X chromosome. In this sense, structural compliance with the human karyotype is a sufficient, but not necessary condition for being classified as human. As discussed in the paper, a type annotation can be associated with an element of the domain to indicate conformance with a certain pattern of observations, or to restrict the possibility of further specialisations. The management of exceptions, e.g. to relax structural constraints, would entail the definition of negative or nested constraints on type annotations, and is to be studied in future work.

\nocite{*}
\bibliographystyle{eptcs}
\bibliography{annotation}

\begin{thebibliography}{10}
\providecommand{\bibitemdeclare}[2]{}
\providecommand{\surnamestart}{}
\providecommand{\surnameend}{}
\providecommand{\urlprefix}{Available at }
\providecommand{\url}[1]{\texttt{#1}}
\providecommand{\href}[2]{\texttt{#2}}
\providecommand{\urlalt}[2]{\href{#1}{#2}}
\providecommand{\doi}[1]{doi:\urlalt{http://dx.doi.org/#1}{#1}}
\providecommand{\bibinfo}[2]{#2}

\bibitemdeclare{article}{DBLP:journals/vlc/BottoniGL08}
\bibitem{DBLP:journals/vlc/BottoniGL08}
\bibinfo{author}{Paolo \surnamestart Bottoni\surnameend},
  \bibinfo{author}{Esther \surnamestart Guerra\surnameend} \&
  \bibinfo{author}{Juan \surnamestart de~Lara\surnameend}
  (\bibinfo{year}{2008}): \emph{\bibinfo{title}{Enforced generative patterns
  for the specification of the syntax and semantics of visual languages}}.
\newblock {\sl \bibinfo{journal}{J. Vis. Lang. Comput.}}
  \bibinfo{volume}{19}(\bibinfo{number}{4}), pp. \bibinfo{pages}{429--455},
  \doi{10.1016/j.jvlc.2008.04.004}.

\bibitemdeclare{article}{DBLP:journals/fuin/BottoniGL16}
\bibitem{DBLP:journals/fuin/BottoniGL16}
\bibinfo{author}{Paolo \surnamestart Bottoni\surnameend},
  \bibinfo{author}{Esther \surnamestart Guerra\surnameend} \&
  \bibinfo{author}{Juan \surnamestart de~Lara\surnameend}
  (\bibinfo{year}{2016}): \emph{\bibinfo{title}{Pattern-based Rewriting through
  Abstraction}}.
\newblock {\sl \bibinfo{journal}{Fundam. Inform.}}
  \bibinfo{volume}{144}(\bibinfo{number}{2}), pp. \bibinfo{pages}{109--160},
  \doi{10.3233/FI-2016-1325}.

\bibitemdeclare{article}{DBLP:journals/vlc/BottoniP13}
\bibitem{DBLP:journals/vlc/BottoniP13}
\bibinfo{author}{Paolo \surnamestart Bottoni\surnameend} \&
  \bibinfo{author}{Francesco \surnamestart {Parisi Presicce}\surnameend}
  (\bibinfo{year}{2013}): \emph{\bibinfo{title}{Annotation processes for
  flexible management of contextual information}}.
\newblock {\sl \bibinfo{journal}{J. Vis. Lang. Comput.}}
  \bibinfo{volume}{24}(\bibinfo{number}{6}), pp. \bibinfo{pages}{421--440},
  \doi{10.1016/j.jvlc.2013.08.003}.

\bibitemdeclare{article}{DBLP:journals/eceasst/BottoniP13}
\bibitem{DBLP:journals/eceasst/BottoniP13}
\bibinfo{author}{Paolo \surnamestart Bottoni\surnameend} \&
  \bibinfo{author}{Francesco \surnamestart {Parisi Presicce}\surnameend}
  (\bibinfo{year}{2013}): \emph{\bibinfo{title}{Annotations on Complex
  Patterns}}.
\newblock {\sl \bibinfo{journal}{{ECEASST}}} \bibinfo{volume}{58},
  \doi{10.14279/tuj.eceasst.58.844}.

\bibitemdeclare{article}{DBLP:journals/eceasst/BraatzB08}
\bibitem{DBLP:journals/eceasst/BraatzB08}
\bibinfo{author}{Benjamin \surnamestart Braatz\surnameend} \&
  \bibinfo{author}{Christoph \surnamestart Brandt\surnameend}
  (\bibinfo{year}{2008}): \emph{\bibinfo{title}{Graph Transformations for the
  Resource Description Framework}}.
\newblock {\sl \bibinfo{journal}{{ECEASST}}} \bibinfo{volume}{10},
  \doi{10.14279/tuj.eceasst.10.158}.

\bibitemdeclare{inproceedings}{BCGLV12}
\bibitem{BCGLV12}
\bibinfo{author}{Roberto \surnamestart Bruni\surnameend},
  \bibinfo{author}{Andrea \surnamestart Corradini\surnameend},
  \bibinfo{author}{Fabio \surnamestart Gadducci\surnameend},
  \bibinfo{author}{Alberto \surnamestart Lluch{-}Lafuente\surnameend} \&
  \bibinfo{author}{Andrea \surnamestart Vandin\surnameend}
  (\bibinfo{year}{2012}): \emph{\bibinfo{title}{A Conceptual Framework for
  Adaptation}}.
\newblock In: {\sl \bibinfo{booktitle}{Proc. {FASE} 2012}}, {\sl
  \bibinfo{series}{LNCS}} \bibinfo{volume}{7212},
  \bibinfo{publisher}{Springer}, pp. \bibinfo{pages}{240--254},
  \doi{10.1007/978-3-642-28872-2\_17}.

\bibitemdeclare{article}{DBLP:journals/fuin/EhrigEPT06}
\bibitem{DBLP:journals/fuin/EhrigEPT06}
\bibinfo{author}{Hartmut \surnamestart Ehrig\surnameend},
  \bibinfo{author}{Karsten \surnamestart Ehrig\surnameend},
  \bibinfo{author}{Ulrike \surnamestart Prange\surnameend} \&
  \bibinfo{author}{Gabriele \surnamestart Taentzer\surnameend}
  (\bibinfo{year}{2006}): \emph{\bibinfo{title}{Fundamental Theory for Typed
  Attributed Graphs and Graph Transformation based on Adhesive {HLR}
  Categories}}.
\newblock {\sl \bibinfo{journal}{Fundam. Inform.}}
  \bibinfo{volume}{74}(\bibinfo{number}{1}), pp. \bibinfo{pages}{31--61}.
\newblock \urlprefix\url{http://dl.acm.org/citation.cfm?id=1231199.1231202}.

\bibitemdeclare{book}{DBLP:series/eatcs/EhrigEPT06}
\bibitem{DBLP:series/eatcs/EhrigEPT06}
\bibinfo{author}{Hartmut \surnamestart Ehrig\surnameend},
  \bibinfo{author}{Karsten \surnamestart Ehrig\surnameend},
  \bibinfo{author}{Ulrike \surnamestart Prange\surnameend} \&
  \bibinfo{author}{Gabriele \surnamestart Taentzer\surnameend}
  (\bibinfo{year}{2006}): \emph{\bibinfo{title}{Fundamentals of Algebraic Graph
  Transformation}}.
\newblock \bibinfo{series}{Monographs in Theoretical Computer Science. An
  {EATCS} Series}, \bibinfo{publisher}{Springer}, \doi{10.1007/3-540-31188-2}.

\bibitemdeclare{article}{DBLP:journals/sosym/EhrigKT09}
\bibitem{DBLP:journals/sosym/EhrigKT09}
\bibinfo{author}{Karsten \surnamestart Ehrig\surnameend},
  \bibinfo{author}{Jochen~Malte \surnamestart K{\"{u}}ster\surnameend} \&
  \bibinfo{author}{Gabriele \surnamestart Taentzer\surnameend}
  (\bibinfo{year}{2009}): \emph{\bibinfo{title}{Generating instance models from
  meta models}}.
\newblock {\sl \bibinfo{journal}{Software and System Modeling}}
  \bibinfo{volume}{8}(\bibinfo{number}{4}), pp. \bibinfo{pages}{479--500},
  \doi{10.1007/s10270-008-0095-y}.

\bibitemdeclare{article}{DBLP:journals/entcs/FerreiraR05}
\bibitem{DBLP:journals/entcs/FerreiraR05}
\bibinfo{author}{Ana Paula~L{\"{u}}dtke \surnamestart Ferreira\surnameend} \&
  \bibinfo{author}{Leila \surnamestart Ribeiro\surnameend}
  (\bibinfo{year}{2005}): \emph{\bibinfo{title}{A Graph-based Semantics For
  Object-oriented Programming Constructs}}.
\newblock {\sl \bibinfo{journal}{Electr. Notes Theor. Comput. Sci.}}
  \bibinfo{volume}{122}, pp. \bibinfo{pages}{89--104},
  \doi{10.1016/j.entcs.2004.06.053}.

\bibitemdeclare{article}{DBLP:journals/tois/GottlobSR96}
\bibitem{DBLP:journals/tois/GottlobSR96}
\bibinfo{author}{Georg \surnamestart Gottlob\surnameend},
  \bibinfo{author}{Michael \surnamestart Schrefl\surnameend} \&
  \bibinfo{author}{Brigitte \surnamestart R{\"{o}}ck\surnameend}
  (\bibinfo{year}{1996}): \emph{\bibinfo{title}{Extending Object-Oriented
  Systems with Roles}}.
\newblock {\sl \bibinfo{journal}{{ACM} Trans. Inf. Syst.}}
  \bibinfo{volume}{14}(\bibinfo{number}{3}), pp. \bibinfo{pages}{268--296},
  \doi{10.1145/230538.230540}.

\bibitemdeclare{article}{DBLP:journals/tcs/LaraBEEPT07}
\bibitem{DBLP:journals/tcs/LaraBEEPT07}
\bibinfo{author}{Juan \surnamestart de~Lara\surnameend},
  \bibinfo{author}{Roswitha \surnamestart Bardohl\surnameend},
  \bibinfo{author}{Hartmut \surnamestart Ehrig\surnameend},
  \bibinfo{author}{Karsten \surnamestart Ehrig\surnameend},
  \bibinfo{author}{Ulrike \surnamestart Prange\surnameend} \&
  \bibinfo{author}{Gabriele \surnamestart Taentzer\surnameend}
  (\bibinfo{year}{2007}): \emph{\bibinfo{title}{Attributed graph transformation
  with node type inheritance}}.
\newblock {\sl \bibinfo{journal}{Theor. Comput. Sci.}}
  \bibinfo{volume}{376}(\bibinfo{number}{3}), pp. \bibinfo{pages}{139--163},
  \doi{10.1016/j.tcs.2007.02.001}.

\bibitemdeclare{inproceedings}{LGC15}
\bibitem{LGC15}
\bibinfo{author}{Juan \surnamestart de~Lara\surnameend},
  \bibinfo{author}{Esther \surnamestart Guerra\surnameend} \&
  \bibinfo{author}{Jes{\'{u}}s~S{\'{a}}nchez \surnamestart Cuadrado\surnameend}
  (\bibinfo{year}{2015}): \emph{\bibinfo{title}{A-posteriori typing for
  Model-Driven Engineering}}.
\newblock In: {\sl \bibinfo{booktitle}{Proc. {MoDELS} 2015}},
  \bibinfo{publisher}{{IEEE}}, pp. \bibinfo{pages}{156--165},
  \doi{10.1109/MODELS.2015.7338246}.

\bibitemdeclare{article}{DBLP:journals/fac/OrejasEP10}
\bibitem{DBLP:journals/fac/OrejasEP10}
\bibinfo{author}{Fernando \surnamestart Orejas\surnameend},
  \bibinfo{author}{Hartmut \surnamestart Ehrig\surnameend} \&
  \bibinfo{author}{Ulrike \surnamestart Prange\surnameend}
  (\bibinfo{year}{2010}): \emph{\bibinfo{title}{Reasoning with graph
  constraints}}.
\newblock {\sl \bibinfo{journal}{Formal Aspects of Computing}}
  \bibinfo{volume}{22}(\bibinfo{number}{3-4}), pp. \bibinfo{pages}{385--422},
  \doi{10.1007/s00165-009-0116-9}.

\bibitemdeclare{inproceedings}{SK08}
\bibitem{SK08}
\bibinfo{author}{Andy \surnamestart Sch{\"{u}}rr\surnameend} \&
  \bibinfo{author}{Felix \surnamestart Klar\surnameend} (\bibinfo{year}{2008}):
  \emph{\bibinfo{title}{15 Years of Triple Graph Grammars}}.
\newblock In: {\sl \bibinfo{booktitle}{Proc. {ICGT} 2008}}, {\sl
  \bibinfo{series}{LNCS}} \bibinfo{volume}{5214},
  \bibinfo{publisher}{Springer}, pp. \bibinfo{pages}{411--425},
  \doi{10.1007/978-3-540-87405-8\_28}.

\end{thebibliography}
\end{document}